\newcommand{\NP}{\ensuremath{\mathsf{NP}}}
\newcommand{\DP}{\ensuremath{\mathsf{P}}}
\newcommand{\APX}{\ensuremath{\mathsf{APX}}}
\newtheoremstyle{newbold}{}{}{\itshape}{}{\bfseries}
   {.}{ }{\thmname{#1}\thmnumber{\@ifnotempty{#1}{ }\@upn{#2}}\thmnote{ {(#3)}}}
\theoremstyle{newbold}
\newtheorem{claim}{Claim}[section]
\newtheorem{theorem}[claim]{Theorem}
\newtheorem{lemma}[claim]{Lemma}
\newtheorem{corollary}[claim]{Corollary}
\newtheorem{remark}[claim]{Remark}
\newcommand{\RCS}[1]{\ensuremath{\textup{\textsf{$#1$-RCS}}}}
\newcommand{\RTwoCS}[1]{\ensuremath{\textup{\textsf{$#1$-R2CS}}}}
\newcommand{\ARCS}[1]{\ensuremath{\textup{\textsf{$#1$-ARCS}}}}
\newcommand{\MinRCS}[1]{\ensuremath{\textup{\textsf{Min-$#1$-RCS}}}}
\newcommand{\MinRTwoCS}[1]{\ensuremath{\textup{\textsf{Min-$#1$-R2CS}}}}
\newcommand{\MinARCS}[1]{\ensuremath{\textup{\textsf{Min-$#1$-ARCS}}}}
\newcommand{\MaxRCS}[1]{\ensuremath{\textup{\textsf{Max-$#1$-RCS}}}}
\newcommand{\MaxARCS}[1]{\ensuremath{\textup{\textsf{Max-$#1$-ARCS}}}}
\newcommand{\MST}{\ensuremath{\textup{\textsf{MST}}}}
\newcommand{\TSP}{\ensuremath{\textup{\textsf{TSP}}}}
\newcommand{\MinTSP}{\ensuremath{\textup{\textsf{Min-TSP}}}}
\newcommand{\MaxTSP}{\ensuremath{\textup{\textsf{Max-TSP}}}}
\newcommand{\MinATSP}{\ensuremath{\textup{\textsf{Min-ATSP}}}}
\newcommand{\MaxATSP}{\ensuremath{\textup{\textsf{Max-ATSP}}}}
\newcommand{\ATSP}{\ensuremath{\textup{\textsf{ATSP}}}}
\newcommand{\Fact}[1]{\ensuremath{\textup{\textsf{$#1$-F}}}}
\newcommand{\AFact}[1]{\ensuremath{\textup{\textsf{$#1$-AF}}}}
\newcommand{\eps}{\varepsilon}
\newcommand{\cut}{\ensuremath{\text{cut}}}
\title{Approximability of Connected Factors}
\author[1]{Kamiel Cornelissen}
\author[1]{Ruben Hoeksma}
\author[1]{Bodo Manthey}
\author[2]{N.~S.~Narayanaswamy}
\author[2]{C.~S.~Rahul}
\affil[1]{University of Twente, Enschede, The Netherlands \authorcr
\texttt{$\{$k.cornelissen, r.p.hoeksma, b.manthey$\}$@utwente.nl}}
\affil[2]{Indian Institute of Technology Madras, Chennai, India \authorcr
\texttt{$\{$swamy, rahulcs$\}$@cse.iitm.ac.in}}
\date{}
\begin{document}
\maketitle

\thispagestyle{plain}
\pagestyle{plain}

\begin{abstract}
Finding a $d$-regular spanning subgraph (or $d$-factor) of a graph is easy by Tutte's reduction to the matching problem. By the same reduction, it is easy to find a minimal or maximal $d$-factor of a graph. However, if we require that the $d$-factor is connected, these problems become NP-hard -- finding a minimal connected $2$-factor is just the traveling salesman problem (TSP).

Given a complete graph with edge weights that satisfy the triangle inequality, we consider the problem of finding a minimal connected $d$-factor. We give a 3-approximation for all $d$ and improve this to an $(r+1)$-approximation for even $d$, where $r$ is the approximation
ratio of the TSP. This yields a 2.5-approximation for even $d$. The same algorithm yields an $(r+1)$-approximation for the directed version of the problem, where $r$ is the approximation ratio of the asymmetric TSP. We also show that none of these minimization problems can be approximated better than the corresponding TSP.

Finally, for the decision problem of deciding whether a given graph contains a connected $d$-factor, we extend known hardness results.
\end{abstract}

\section{Introduction}
\label{sec:intro}

The traveling salesman problem (\MinTSP) is one of the basic combinatorial optimization problems:
given a complete graph $G=(V,E)$ with edge weights that satisfy the triangle inequality,
the goal is to find a Hamiltonian cycle of minimum total weight.
Phrased differently, we are looking for a subgraph of $G$ of minimum weight
that is 2-regular, connected, and spanning.
While \MinTSP\ is NP-hard~\cite[ND22]{GJ79}, omitting the requirement that the subgraph must be connected
makes the problem polynomial-time solvable~\cite{LPMatching,WT54}.
In general, $d$-regular, spanning subgraphs (also called $d$-factors) of minimum weight can be found in polynomial
time using Tutte's reduction~\cite{LPMatching,WT54} to the matching problem.
Cheah and Corneil~\cite{CC90} have shown that deciding whether a given graph $G=(V,E)$ has a $d$-regular connected spanning subgraph is
\NP-complete for every $d \geq 2$, where $d=2$ is just the Hamiltonian cycle problem~\cite[GT37]{GJ79}.
Thus, finding a connected $d$-factor of minimum weight is also \NP-hard for all $d$.

While one might think at first glance that the problem cannot become easier for larger $d$,
finding (minimum-weight) connected $d$-factors is easy for $d\geq n/2$, where $n = |V|$,
as in this case any $d$-factor is already connected. This poses the question for which values of $d$
(as a function of $n$) the problem becomes tractable.

In this paper, we analyze the complexity and approximability of the problem of finding a $d$-factor of minimum weight.

\subsection{Problem Definitions and Preliminaries}
\label{ssec:def}

In the following, $n$ is always the number of vertices. To which graph $n$ refers will be clear from the context.

All problems defined below deal with undirected graphs, unless stated otherwise.
For any $d$, $\RCS d$ is the following decision problem: Given an arbitrary undirected graph $G$, does $G$ have a connected $d$-factor?
Here, $d$ can be a constant, but also a function of the number $n$ of vertices of the input graph $G$.
\RCS 2\ is just the Hamiltonian cycle problem.

Just as \MinTSP\ is the optimization variant of \RCS 2, we consider the optimization variant
of \RCS d, which we call \MinRCS d:
As an instance, we are given an undirected complete graph $G=(V,E)$ and non-negative edge weights $w$ that satisfy the triangle inequality, i.e.,
$w(\{x,z\}) \leq w(\{x,y\}) + w(\{y,z\})$ for every $x, y, z \in V$.
The goal of $\MinRCS d$ is to find a connected $d$-factor of $G$ of minimum weight.
\MinRCS 2\ is just \MinTSP.

A \emph{bridge edge} of a graph is an edge whose removal increases the number of components
of the graph. A graph $G$ is called \emph{2-edge connected} if $G$ is connected and does not contain bridge edges.
For even $d$, any connected $d$-factor is also 2-edge-connected, i.e.,
does not contain bridge edges.
This is not true for odd $d$.
If we require 2-edge-connectedness also for odd $d$,
we obtain the problem
\MinRTwoCS d, which is defined as \MinRCS d, but asks for a 2-edge-connected
$d$-factor. For consistency, $\MinRTwoCS d$ is also defined for even $d$,
although it is then exactly the same problem as $\MinRCS d$.

Finally, we also consider the asymmetric variant of the problem:
given a directed complete graph $G=(V,E)$, find a spanning connected
subgraph of $G$ that is $d$-regular. Here, $d$-regular means that
every vertex has indegree $d$ and outdegree $d$. We denote the corresponding
minimization problem by \MinARCS d. \MinARCS 1\ is just the asymmetric TSP (\MinATSP).

\MaxRCS d\ and \MaxARCS d\ are the maximization variants
of \MinRCS d\ and \MinARCS d, respectively. For \MaxRCS d\ and \MaxARCS d we do not require that the edge weights satisfy the triangle inequality.
In the same way as for the minimization variants, \MaxRCS 2\ is the maximum TSP (\MaxTSP) and \MaxARCS 1\ is the maximum ATSP
(\MaxATSP).

If the graph and its edge weights are clear from the context, we abuse notation by also denoting by $\RCS d$ a minimum-weight connected $d$-factor,
by $\RTwoCS d$ a minimum-weight
2-edge-connected $d$-factor, and by $\ARCS d$ a minimum-weight connected $d$-regular subgraph of a directed graph.

In the same way, let $\Fact d$ denote a minimum-weight $d$-factor (no connectedness
required) of a graph and let $\AFact d$ denote a minimum-weight $d$-factor
of a directed graph.
Let $\MST$ denote a minimum-weight spanning tree, and let $\TSP$ and $\ATSP$ denote
minimum-weight (asymmetric) TSP tours.
We have $\RCS 2 = \TSP$ and $\ARCS 1 = \ATSP$. Furthermore,
$\Fact 2$ is the undirected cycle cover problem and $\AFact 1$ is the directed
cycle cover problem.

We note that $d$-factors do not exist for all combinations of $d$ and $n$. If both $n$ and $d$ are odd,
then no $n$-vertex graph possesses a $d$-factor. For all other combinations of $n$ and $d$ with $d\leq n-1$, there exist $d$-factors
in $n$-vertex graphs, at least in the complete graph.

In the following, $K_n$ denotes the undirected complete graph on $n$ vertices.
A vertex $v$ of a graph $G$ is called a \emph{cut vertex} if removing $v$ increases the number of components of $G$.

\subsection{Previous Results}
\label{ssec:previous}

Requiring connectedness in addition to some other combinatorial property has already been studied for
dominating sets~\cite{GuhaKhuller} and vertex cover~\cite{EscoffierEA}. For problems such as minimum $s$-$t$ vertex separator, which are known
to be solvable in polynomial time, the connectedness condition makes it NP-hard, and recent results have studied the parameterized complexity of
finding a connected $s$-$t$ vertex separator~\cite{Marx2013}.
Also finding connected graphs with given degree sequences that are allowed to be violated only slightly has been
well-studied~\cite{BND,SinghLau:TreeOne:2007}.

As far as we are aware, so far only the maximization variant \MaxRCS d\
of the connected factor problem has been considered for $d \geq 3$. Baburin, Gimadi, and
Serdyukov proved
that \MaxRCS d\ can be approximated within a factor of
$1-\frac{2}{d \cdot (d+1)}$~\cite{BG:2004,GimadiS}. A
slightly better approximation ratio can be achieved if the edge
weights are required to satisfy the triangle inequality~\cite{BG:2006}.
Baburin and Gimadi also considered approximating both \MaxRCS d\ and \MinRCS d\
(both without triangle inequality) for random instances~\cite{BG:2006,BG:2008}.
For $d=2$, we inherit the approximation results for
\MinTSP\ of $3/2$~\cite[Section 2.4]{Shmoys2011} and \MaxTSP\ of $7/9$~\cite{MaxTSP79}. For $d=1$, we inherit
the $O(\log n/\log \log n)$-approximation for \MinATSP~\cite{ATSPolog} and
$2/3$ for \MaxATSP~\cite{KaplanEA2005}.
As far as we know, no further polynomial-time approximation algorithms
with worst-case guarantees are known for \MinRCS d.
Like for \MinTSP~\cite[Section 2.4]{Shmoys2011}, the triangle
inequality is crucial for approximating
\MinRCS d\ and \MinARCS d\ -- otherwise, no polynomial-time approximation algorithm
is possible, unless $\DP = \NP$.
Baburin and Gimadi~\cite{BG:2004,BG:2006} claimed that \MaxRCS d is \APX-hard because it generalizes \MaxTSP. However,
this is only true if we consider $d$ as part of the input, as then $d=2$ corresponds
to \MaxTSP.

\subsection{Our Results}
\label{ssec:results}

Table~\ref{tab:overview} shows an overview of previous results and our results.

Our main contributions are a 3-approximation algorithm for \MinRCS d\ for any $d$
and a $2.5$-approximation algorithm for \MinRCS d\ for even $d$ (Section~\ref{sec:approx}).
The latter is in fact an $(r+1)$-approximation algorithm for \MinRCS d, where $r$ is the factor within which 
\MinTSP\ can be approximated. This result can be extended to \MinARCS d, where $r$ is now the approximation ratio
of \MinATSP. Our approximation algorithms, in particular for the maximization variants, are in the spirit
of the classical approximation algorithm of Fisher et al.~\cite{FisherEA} for \MaxTSP: compute a non-connected structure, and then remove and add
edges to make it connected.

As lower bounds, we prove that \MinRCS d\ and \MinARCS d\ cannot be approximated
better than \MinTSP\ and \MinATSP, respectively (Section~\ref{sec:hardness}). In particular, this implies the \APX-hardness
of the problems.

We prove some structural properties of connected $d$-factors
and their relation to $\TSP$, $\MST$, and $d$-factors without connectedness requirement
(Section~\ref{sec:structural}). Some of these properties are
needed for the approximation algorithms and some might be interesting
in their own right or were initially counterintuitive to us.

Our algorithms work for all values of $d$, even when $d$ is part of the input.
The hardness results are extended to the case where $d$ grows with $n$.
In Section~\ref{sec:special}, we improve our approximation
guarantee for $d \geq n/3$, prove that $\RCS{(\frac n2 -1)} \in \DP$,
and generalize Baburin and Gimadi's algorithm~\cite{BG:2004} to directed instances.

\begin{table}[t]
\centering
    \begin{tabular}{lll} \toprule
   \emph{problem} & \emph{result} & \emph{reference} \\ \midrule
   \RCS d  & in \DP\ for $d \geq \frac n2 -1$ & trivial for $d \geq n/2$,
     Section~\ref{ssec:decision} \\ 
   & \NP-complete for constant $d$ & Cheah and Corneil~\cite{CC90} \\
    & and $d$ of any growth rate up to $O(n^{1-\eps})$ & Section~\ref{ssec:growingd} \\
    \midrule
   \MinRCS d & $(r +1)$-approximation for even $d$ & Section~\ref{ssec:tspapprox} \\
            & 3-approximation for odd $d$ & Section~\ref{ssec:threeapx} \\ 
            & 2-approximation for $d \geq n/3$ & Section~\ref{ssec:twoapprox} \\
						& no better approximable than \MinTSP\ & Section~\ref{ssec:tspinapprox} \\ \midrule
   \MinRTwoCS d & 3-approximation & Section~\ref{ssec:threeapx} \\ 
			& no better approximable than \MinTSP\ & Section~\ref{ssec:tspinapprox} \\ \midrule
   \MinARCS d & $(r +1)$-approximation & Section~\ref{ssec:tspapprox} \\
							& no better approximable than \MinATSP\ & Section~\ref{ssec:tspinapprox} \\ \midrule
   \MaxRCS d & $(1-\frac{2}{d \cdot (d+1)})$-approximation& Baburin and Gimadi~\cite{BG:2004}\\
\midrule
   \MaxARCS d & $(1 - \frac{1}{d \cdot (d+1)})$-approximation & Section~\ref{ssec:directedapx} \\
\bottomrule
  \end{tabular}
\caption{Overview of the complexity and approximability of finding (optimal) connected $d$-factors.
We left out that all optimization variants are polynomial-time solvable for $d \geq n/2$ and \APX-hard
according to Sections~\ref{ssec:tspinapprox} and~\ref{ssec:growingd}. Here, $r$ is the approximation
ratio of \MinTSP\ or \MinATSP.}
\label{tab:overview}
\end{table}

\section{Structural Properties}
\label{sec:structural}

In the following two lemmas, we make statements about the relationship between the
weights of optimal solutions of the different minimization problems.
We call an inequality $A \leq c \cdot B$ \emph{tight}
if, for every $\eps > 0$, replacing $c$ by $c-\eps$ does not yield a valid statement
for all instances.

\begin{lemma}[undirected comparison]
\label{lem:undirected}
\begin{enumerate}
\item $w(\MST) \leq w(\RCS d) \leq w(\RTwoCS d)$ for all $d$ and all undirected instances, and this is tight. \label{mstrcs}
\item $w(\Fact d) \leq w(\RCS d)$ for all $d$ and all undirected instances, and this is tight.
\label{factrcs}
\item $w(\RTwoCS d) \leq 3 \cdot w(\RCS d)$ for all odd $d$ and all undirected instances, and this is tight for all odd $d$. \label{onetwo}
\item $w(\TSP) \leq w(\RCS d)$ for all even $d$ and all undirected instances, and this is tight.
\label{tspeven}
\item $w(\TSP) \leq 2 \cdot w(\RCS d)$ for all odd $d$ and all undirected instances, and this is tight for all odd $d$. \label{tspodd}
\item $w(\TSP) \leq \frac 43 \cdot w(\RTwoCS 3)$ for all undirected instances, and this is tight.
\label{fourthird3}
\item For all odd $d$, there are instances with
$w(\TSP) \geq (\frac 43 - o(1)) \cdot w(\RTwoCS d)$. \label{fourthird}
\item $w(\Fact{(d-2)}) \leq \frac{d-2}d \cdot w(\Fact d)$ and $w(\RCS{(d-2)}) \leq w(\RCS{d})$ for all even $d \geq 4$ and all undirected instances, and both inequalities
are tight. \label{evenmonotone}
\item Monotonicity does not hold for odd $d$: for every odd $d \geq 5$, there exist instances with
  $w(\RCS{(d-2)}) \geq \frac{d+2}{d} \cdot w(\RCS d)$. \label{nonmonodd}
\end{enumerate}
\end{lemma}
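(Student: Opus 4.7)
The plan is to prove Lemma~\ref{lem:undirected} by grouping its nine parts according to proof technique. Parts~\ref{mstrcs}, \ref{factrcs}, \ref{tspeven}, \ref{tspodd}, and the first inequality in part~\ref{evenmonotone} follow from structural containment or Eulerian-shortcut arguments; parts~\ref{onetwo}, \ref{fourthird3}, and the second inequality in part~\ref{evenmonotone} require a Petersen-type 2-factor decomposition combined with a repair step; and parts~\ref{fourthird}, \ref{nonmonodd}, together with the tightness assertions of the earlier parts, call for explicit extremal instances.

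For the containment cluster, any connected spanning subgraph contains a spanning tree, giving $w(\MST) \leq w(\RCS d)$, while the other inclusions $w(\RCS d) \leq w(\RTwoCS d)$ and $w(\Fact d) \leq w(\RCS d)$ are immediate from the nesting of feasible sets. For part~\ref{tspeven}, an even $d$-factor is Eulerian, so triangle-inequality shortcutting of an Eulerian walk yields a Hamiltonian cycle of weight at most $w(\RCS d)$. For part~\ref{tspodd}, I first double every edge of an optimum odd connected $d$-factor to obtain an Eulerian multigraph of weight $2 w(\RCS d)$, whose shortcut gives the factor 2. The first inequality in part~\ref{evenmonotone} uses Petersen's 2-factor theorem: an even $d$-regular graph decomposes into $d/2$ edge-disjoint 2-factors, and removing the heaviest loses at most a factor $(d-2)/d$ in weight.

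For the Petersen-and-repair cluster, the second inequality in part~\ref{evenmonotone} needs an additional step because dropping a 2-factor may disconnect the graph; I would combine the Petersen decomposition with the shortcut Hamiltonian cycle supplied by part~\ref{tspeven} to restore connectedness without exceeding $w(\RCS d)$. Part~\ref{onetwo} is proved by starting from a minimum connected odd $d$-factor $F$, identifying its bridges, and locally rerouting each bridge through an auxiliary vertex via the triangle inequality, with the charging of at most two extra detour edges per bridge yielding the constant 3. Part~\ref{fourthird3} is the most delicate: apply Petersen's theorem to an optimum 2-edge-connected cubic factor $F$ to obtain a decomposition $F = C \cup M$ with $M$ a perfect matching and $C$ a 2-factor, then balance the two natural shortcut strategies (stitching the cycles of $C$ through $M$, versus Eulerian-shortcutting of $F \cup M$) against the averaging bound $w(M) \leq \tfrac13 w(F)$ obtained from a covering triple of matchings.

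Each tightness claim in parts~\ref{mstrcs}--\ref{fourthird3} will be verified by an explicit weighted $K_n$ realising the equality asymptotically; for parts~\ref{mstrcs}--\ref{tspodd}, a blown-up cycle or path with carefully chosen edge weights suffices. Part~\ref{fourthird} uses a chain of small 2-edge-connected cubic gadgets joined by expensive bridge-style edges, so that $\RTwoCS d$ can reuse the cheap gadgets while the $\TSP$ tour must pay for the glue. Part~\ref{nonmonodd} needs an instance in which any connected odd $(d-2)$-factor is forced through expensive cut edges that two extra edges per vertex would let it bypass. I expect the two main obstacles to be the precise $4/3$ constant in part~\ref{fourthird3}, since naive perturbations of the matching argument typically only yield $3/2$, and the connectedness-preserving move in the second inequality of part~\ref{evenmonotone}, since Petersen decompositions do not respect connectedness.
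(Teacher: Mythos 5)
Your plan is right for the easy containment and Eulerian-shortcut parts (Items 1, 2, 4, 5 and the first inequality of Item 8), but several of the harder items have genuine gaps. For Item 3, ``locally rerouting each bridge through an auxiliary vertex'' is not a proof: the rerouting edges must run through the interiors of the 2-edge-connected components on either side of each bridge, and since a component can be incident to many bridges, it is unclear how you charge the detours so that the total stays within $3\cdot w(\RCS{d})$ while also keeping the graph $d$-regular. The paper does not repair the optimum at all; it builds a 2-edge-connected $d$-factor from scratch from a minimum (not necessarily connected) $d$-factor plus a doubled-\MST\ Hamiltonian cycle shortcut through one carefully chosen non-bridge edge per leaf 2-edge-connected component, giving $w(\Fact{d})+2w(\MST)\leq 3\cdot w(\RCS{d})$. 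Similarly, your fix for the second inequality of Item 8 does not work: removing a 2-factor and adding back the Hamiltonian cycle from Item 4 yields a $d$-factor again (degrees $d-2+2$), and even ignoring that, the weight bound would be roughly $2\cdot w(\RCS{d})$, not $w(\RCS{d})$. The paper instead reduces degrees locally: repeatedly pick a vertex $v$ of degree $d$ and shortcut two of its edges --- a non-adjacent pair $\{v,x\},\{v,y\}$ with $\{x,y\}\notin R$ if $v$ is not a cut vertex, or one edge into each of two components of $R-v$ if it is (using that a connected graph with all degrees even is 2-edge-connected) --- which preserves connectivity and, by the triangle inequality, never increases the weight.

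Items 6 and 7 also have concrete problems. A ``covering triple of matchings'' is a proper 3-edge-coloring, which bridgeless cubic graphs need not admit (the Petersen graph); what you need is the fractional statement (fractional edge-chromatic number 3, Edmonds/Seymour), which yields perfect matchings $M_1,\ldots,M_k$ and weights $\lambda_i$ summing to 3 covering every edge with total weight 1, hence some perfect matching of weight at most $\frac 13\, w(\RTwoCS{3})$; adding a minimum-weight perfect matching to $\RTwoCS{3}$ then gives a connected Eulerian 4-regular multigraph to shortcut, and the Petersen decomposition $F=C\cup M$ plays no role. For Item 7, a single chain of gadgets joined by expensive edges gives ratio tending to $1$, not $\frac 43$: a 2-edge-connected factor must also cross every cut of the chain at least twice, so it pays for the ``glue'' as well. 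The paper's instance has \emph{three} parallel chains of gadgets of size $d+1$ between two hubs $A,B$ of odd size $d+2$; parity lets the 2-edge-connected $d$-factor cross the hub cuts only three times (total weight $3m$), while any tour that crosses some intermediate cut only twice is forced into two long detours through $A$ and $B$, giving weight at least $4m-2$. Finally, the tightness claims and Item 9 need explicit constructions of the star type (a central vertex plus $d$ groups of $d+2$ vertices with intra-group weight 0); a ``blown-up cycle or path'' does not realize the tight factors 3 and 2 in Items 3 and 5.
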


\begin{proof}
\emph{Items~\ref{mstrcs} and~\ref{factrcs}:}
The inequalities follow immediately from the definitions.
The inequality of Item~\ref{factrcs} and the second inequality of Item~\ref{mstrcs} are tight for instances
where all edge weights are equal. That the first inequality of Item~\ref{mstrcs} is also tight can be seen as follows:
For any $k\geq 2$ we can construct a graph consisting of $k$ groups of $d+1$ vertices. We set the distance between each pair of vertices from the same group equal to 0, and the distance between each pair of vertices from different groups equal to $1$.
Any MST of the new instance has a weight of $k-1$. We can construct a (2-edge-)connected $d$-factor of weight $k$ by combining a global TSP tour
with a $d-2$-factor within each group. Since this holds for all $k$, the first inequality of Item~\ref{mstrcs} is tight.

\emph{Item~\ref{onetwo}:} The inequality is shown to hold constructively by Algorithm~\ref{threeapx},
which computes a 2-edge-connected $d$-factor that weighs no more than three times
the weight of a minimum-weight connected $d$-factor. It is tight because of the
following example: 
The set of vertices of the instance consists of a vertex $v$ plus $d$ sets $V_1, \ldots, V_d$ which consist of $d+2$ vertices each.
We set the distance of $v$ to each of the other vertices equal to $1$. The distance between each pair of vertices from the same set $V_i$ is $0$. Finally, the distance between all pairs of vertices from different sets $V_i$ and $V_j$ is $2$. An optimal connected $d$-factor connects $v$ to one vertex
of each $V_i$. Since each $V_i$ has an odd number of vertices and $d$ is odd as well, we can complete
the connected $d$-factor without any further cost. Thus, the total cost is $d$.

An optimal 2-edge-connected $d$-factor has a weight of $3d$: Because each $V_i$
has an odd number of vertices and $d$ is odd, any 2-edge-connected $d$-factor must have
at least three edges leaving each set $V_i$. If such an edge $e$ is incident with $v$, then we charge
its weight to $V_i$. The other possibility is that $e$ is incident with a vertex from some $V_j$, where $j \neq i$. In this case $e$
has a weight of $2$ and we charge a weight of $1$ to both $V_i$ and $V_j$. The total
charge of all sets $V_I$ equals the total weight of the 2-edge-connected $d$-factor. Since each $V_i$ is charged at least $3$, the total weight of any 2-edge-connected $d$-factor is at least $3d$.

\emph{Item~\ref{tspeven}:} Since $d$ is even, \RCS d\ is Eulerian and we can take shortcuts to obtain a TSP tour whose weight
is no larger than w(\RCS d). This is tight because $w(\TSP) \geq w(\MST)$ and
Item~\ref{mstrcs}.

\emph{Item~\ref{tspodd}:} If we duplicate each edge of a connected $d$-factor, we obtain a Eulerian
multi-graph. Taking shortcuts yields a TSP tour and proves the inequality. The same
instance as we used in the proof of Item~\ref{onetwo} shows that this is tight, as every set $V_i$ is charged at least $2$ for any TSP tour.

\emph{Item~\ref{fourthird3}:} We exploit the following property of 3-regular 2-edge-connected graphs:
for every odd subset $U$ of the vertices, there are at least three edges that connect $U$ to $V \setminus U$.
This implies that the fractional edge-coloring number of such a graph is 3~\cite[Corollary 28.5a]{Schrijver}.
We consider an optimal solution $\RTwoCS 3$. The above implies that there exists a collection of -- not necessarily disjoint --
matchings $M_1, \ldots, M_k$ for some $k$
as well as non-negative numbers $\lambda_1, \ldots, \lambda_k$ such that $\sum_{i: e \in M_i} \lambda_i = 1$
for all edges $e \in \RTwoCS 3$ and $\sum_{i=1}^k \lambda_i=3$ (Seymour~\cite{Seymour:Multicolorings:1979} attributes this to Edmonds~\cite{Edmonds:Polyhedron:1965}). This implies $\sum_{i=1}^k \lambda_i w(M_i) = w(\RTwoCS 3)$.
Since the total number of edges in $\RTwoCS 3$ equals $3n/2$ and the $\lambda_i$'s sum to 3, all the matchings  $M_1, \ldots, M_k$ are necessarily perfect. Also, since $\sum_{i=1}^k \lambda_i = 3$,
there exists an $i$ with $w(M_i) \leq \frac 13 \cdot w(\RTwoCS 3)$. Let $M$ be a minimum-weight perfect matching of the instance.
Then $w(M) \leq w(M_i)$. Furthermore, adding $M$ to $\RTwoCS 3$ yields a 4-regular connected multi-graph
of weight at most $\frac 43 \cdot w(\RTwoCS 3)$, which is thus Eulerian. Taking shortcuts yields a TSP tour, which proves the claim.
It is tight by Item~\ref{fourthird}.

\emph{Item~\ref{fourthird}:}
For arbitrary odd $d$ we construct a complete graph $G=(V,E)$ as follows:
$G$ consists of $3m-1$ gadgets $A$, $B$, and $U_{i,j}$ for $i \in \{1,2,3\}$ and $j \in \{1,\ldots, m-1\}$. The gadgets
$A$ and $B$ consist of $d+2$ vertices, all $U_{i,j}$ consist of $d+1$ vertices.
All pairs of vertices within the same
gadget have a distance of $0$. All edges between $A$ and $U_{i, 1}$, between
$U_{i, j}$ and $U_{i, j+1}$, and between $U_{i, m-1}$ and $B$ have a weight of one for all $i \in \{1,2,3\}$ and
$j \in \{1, \ldots, m-2\}$.
All other distances are obtained by metric completion, i.e., by taking the shortest path distances. Thus, e.g.,
the distance between a vertex in $A$ and a vertex in $B$ is $m$. Figure \ref{fig:upperbound} depicts the construction.

\begin{figure}[t]
\centering
\newcommand{\xstretch}{1.8}
\newcommand{\ystretch}{1.4}
\begin{tikzpicture}
   \tikzstyle{DotsNode}=[ellipse, line width=0.5pt, inner sep=0pt, minimum width=14mm, minimum height=5mm]
   \tikzstyle{Node}=[DotsNode, draw]
   \tikzstyle{Edge}=[line width=0.7pt]
	 \tikzstyle{Cut}=[Edge, dotted]
	 \coordinate (cutshift) at (-0.5*\xstretch, 0);
	 \node[Node] (A) at (0,-2*\ystretch) {$A$};
	 \node[Node] (B) at (5*\xstretch,-2*\ystretch) {$B$};
	 \foreach \i in {1,2,3}
	   {\foreach \j/\k in {1/1,2/2,4/m-1}
		    {\node[Node] (U\i\j) at (\j*\xstretch, -\i*\ystretch) {$U_{\i,\k}$};}
				 \node[DotsNode] (U\i3) at (3*\xstretch, -\i*\ystretch) {$\cdots$};
         \draw[Edge] (A) -- (U\i1);
				 \draw[Edge] (B) -- (U\i4);
				 \foreach \j/\k in {1/2, 2/3, 3/4}
				   {\draw[Edge] (U\i\j) -- (U\i\k);}}
   \foreach \j/\k in {1/1, 2/2, 5/m}
	   {\draw[Cut] ($(\j*\xstretch, -3.1*\ystretch)+(cutshift)$)
	            -- ($(\j*\xstretch, -0.9*\ystretch)+(cutshift)$) node[above] {$\cut_{\k}$};}
\end{tikzpicture}
\caption{Graph with $w(\TSP) \geq (\frac 43 - o(1)) \cdot w(\RTwoCS d)$. Gadgets $A$ and $B$ consist
			of $d+2$ vertices, all $U_{i,j}$ consist of $d+1$ vertices. Edges between a pair of vertices in the same gadget have weight 0. An edge between two gadgets indicates that the weight of an edge between a vertex in the first gadget and a vertex in the second gadget equals $1$. All other distances are obtained via metric completion.}
			\label{fig:upperbound}
\end{figure}
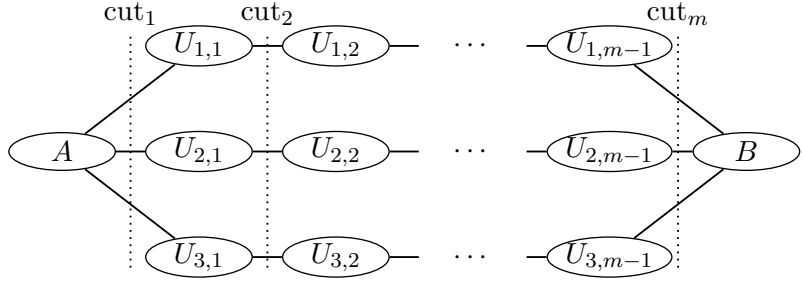

We build a $d$-factor of total weight $3m$ as follows: For each pair of gadgets that has a distance of $1$, we take an edge between the two gadgets and include it in the $d$-factor. We do so in a way that all selected edges of weight $1$ are disjoint. We complete the $d$-factor by taking appropriate edges of weight 0 within the gadgets.
By the choice of the size of the gadgets, this can be done. The $d$-factor obtained is even 2-edge-connected.

Now we show that $w(\TSP) \geq 4m-2$.
Let $T$ be a minimum-weight TSP tour on $G$. Consider $T'$, the multigraph where each edge in $T$ of length greater than $0$ is replaced by its shortest path over edges with weight $1$. Then $w(T')=w(T)$.
For $j \in \{2, \ldots, m-1\}$, let $\cut_j$ be the sum over all $i$ of the number of edges connecting $U_{i,j-1}$ to $U_{i, j}$.
Let $\cut_1$ be the number of edges connecting $A$ to $U_{1, 1}$, $U_{2,1}$, and $U_{3,1}$,
and let $\cut_m$ be the number of edges connecting $U_{1, m-1}$, $U_{2,m-1}$, and $U_{3,m-1}$ to $B$.
(The cuts are indicated by dotted lines in Figure~\ref{fig:upperbound}.)
Since $T'$ uses only edges within gadgets or edges of weight $1$, we have $w(T') = \sum_{j=1}^m \cut_j$.
Since $T'$ is Eulerian, we know that $\cut_j$ is even for all $j$. Since $T'$ is connected, $\cut_j \geq 2$ for all $j$.
If $\cut_j \geq 4$ for all $j$, then $w(T') \geq 4m-2$, and we are done.
Otherwise, $\cut_j = 2$ for some $j$. For ease of notation, we assume that $j \in \{2, \ldots, m-1\}$. The remaining cases
are almost identical.
Then there is some $U_{i, j-1}$ that is not connected to $U_{i, j}$ in $T'$. Thus, there must be two paths in $T'$
that connect $U_{i, j-1}$ via $A$ to $U_{i', j-1}$ and $U_{i'', j-1}$ ($i', i'' \neq i$, but $i'=i''$ is allowed).
In the same way, $T'$ must contain two paths from $U_{i, j}$ via $B$ to $U_{i', j}$ and $U_{i'',j}$.
The weight of the former paths is $2j-2$ each. The weight of the latter paths is $2m-2j$ each. In total,
the weight is $4m-4$. Adding $\cut_j$ yields the result.

\emph{Item~\ref{evenmonotone}:} The inequality $w(\Fact{(d-2)}) \leq \frac{d-2}d \cdot w(\Fact d)$ holds
since every $d$-factor for even $d$ can be split into $d/2$ 2-factors, and we can remove the lightest 2-factor to obtain a $(d-2)$-factor.
This is tight if all edge weights are equal.

Now consider the claim for connected factors. For $n=d+1$, the $d$-factor is the complete graph and the claim is trivial.
Thus, we assume $n>d+1$.
Let $R$ be a connected $d$-factor.
We describe a process to obtain a connected $(d-2)$-factor from $R$ that weighs at most the weight of $R$. To do this, we use the following invariants:
First, if a graph has only even degrees and is connected, then it is 2-edge-connected.
(Otherwise, removing a bridge edge would result in two components, and the sum of degrees within each component would be odd.)
Second, if the maximum degree of a graph is $d \leq n-2$ and the graph is connected, then any vertex with degree $d$ is adjacent
to at least two vertices $x$ and $y$ such that $x$ and $y$ are not adjacent.

Let us now describe the process:
We take any vertex $v$ with degree $d$. If $v$ is not a cut vertex, then we choose a pair of edges $\{v,x\}$ and $\{v,y\}$ of $R$ such that $\{x,y\} \notin R$. This exists due to the invariant. We replace $\{v,x\}$ and $\{v,y\}$ by $\{x,y\}$, which reduces the degree of $v$ by two and does not change the degree of the other vertices.
The graph is still connected, as $v$ has still a degree of at least $d - 2 \geq 2$ and was no cut vertex.

If $v$ is a cut vertex, let $C_1,  \ldots, C_k$ be the components obtained by removing $v$. Vertex $v$ has at least two edges to each component
by 2-edge-connectedness.
We take one edge to $x\in C_1$ and one edge to $y\in C_2$ and shortcut it, i.e, remove $\{v,x\}$ and $\{v,y\}$ and add
$\{x,y\}$. Again this reduces the degree of $v$ by two and does not change the degree of any other vertex. Also by 2-edge-connectedness,
it does not disconnect the graph.
Repeating this process until all vertices have degree $d-2$ yields a connected $(d-2)$-factor. By the triangle inequality,
its weight is no more than the weight of the connected $d$-factor.

The analysis is tight because of Item~\ref{tspeven}.

\emph{Item~\ref{nonmonodd}:} Consider the following instance: we have a central vertex $v$
and $d$ sets $V_1, \ldots, V_{d}$, which each consist of $d+2$ vertices. The distance of $v$ to each other vertex is $1$.
Within any set $V_i$, the distances are $0$. The distance between each pair of vertices in two different sets $V_i$ and $V_j$ is $2$.

The optimal connected $d$-factor has a weight of $d$: we connect $v$ to one vertex of each set $V_i$ and complete
the $d$-factor locally within each $V_i$. In any connected $(d-2)$-factor, $v$ is connected to $k \leq d-2$
of the sets $V_1, \ldots, V_{d}$. Thus, there are $d-k$ sets that have to be connected with an edge of weight $2$. This results in
costs of at least $k + 2 \cdot (d-k) \geq d+2$. 
\end{proof}

\begin{lemma}[directed comparison]
\label{lem:directed}
\begin{enumerate}
\item $w(\AFact d) \leq w(\ARCS d)$ for all $d$ and all directed instances, and this is tight.
\label{arcsfact}
\item $w(\ATSP) \leq w(\ARCS d)$ for all $d$ and all directed instances, and this is tight.\label{arcsatsp}
\item $w(\AFact{(d-1)}) \leq \frac{d-1}{d} \cdot w(\AFact d)$ and $w(\ARCS{(d-1)}) \leq w(\ARCS d)$ for all $d \geq 2$
and all directed instances, and both inequalities are tight. \label{dir:monotone}
\end{enumerate}
\end{lemma}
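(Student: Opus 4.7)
I plan to handle the three items in order, following the pattern of their undirected counterparts in Lemma~\ref{lem:undirected}. Item~\ref{arcsfact} is immediate from the definitions: every $\ARCS d$ is in particular an $\AFact d$. For Item~\ref{arcsatsp}, any $\ARCS d$ is balanced and weakly connected, hence Eulerian (and strongly connected), so I fix an Eulerian circuit and shortcut using the triangle inequality to obtain an ATSP tour of weight at most $w(\ARCS d)$. For tightness of Item~\ref{arcsfact} any equal-weights instance works, giving $w(\AFact d) = w(\ARCS d) = nd$. For tightness of Item~\ref{arcsatsp} I use the ``group'' instance: partition $n = k(d+1)$ vertices into $k$ groups of $d+1$ with intra-group arcs of weight $0$ and inter-group arcs of weight $1$ (already metric); any strongly connected $d$-factor and any ATSP tour each need exactly $k$ inter-group arcs, so $w(\ATSP) = w(\ARCS d) = k$.

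For the first inequality of Item~\ref{dir:monotone}, my plan is to decompose an optimal $\AFact d$ into $d$ arc-disjoint cycle covers via K\"onig's theorem and discard the cheapest. I form the bipartite graph $H$ with bipartition $V^+ \cup V^-$ and an edge $v^+u^-$ for each arc $(v,u)$ of $\AFact d$; then $H$ is $d$-regular bipartite, so K\"onig's edge-colouring theorem decomposes it into $d$ perfect matchings, corresponding to arc-disjoint $1$-factors $C_1, \ldots, C_d$ of $\AFact d$ whose weights sum to $w(\AFact d)$. Discarding the cheapest (of weight at most $w(\AFact d)/d$) leaves a $(d-1)$-factor of weight at most $\tfrac{d-1}{d}\cdot w(\AFact d)$. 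Tightness follows from any equal-weights instance.

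The main obstacle is the second inequality of Item~\ref{dir:monotone}: $w(\ARCS{(d-1)}) \leq w(\ARCS d)$. My plan mirrors the iterative shortcutting of Item~\ref{evenmonotone}, adapted to digraphs so that each shortcut reduces both the in- and out-degree of one vertex by one. Starting from an optimal $R = \ARCS d$, I process vertices one at a time: at the current vertex $v$ of in- and out-degree $d$, I pick an in-neighbour $u$ and an out-neighbour $w$ of $v$ and replace the arcs $(u,v),(v,w)$ by the single arc $(u,w)$, which by the triangle inequality does not increase the total weight. Two invariants must be maintained: the current graph stays simple and stays strongly connected. For simplicity a counting argument covers every case in which some in-neighbour of $v$ is not itself an out-neighbour, because then there exist in-neighbour $u_i$ and out-neighbour $w_j$ with $(u_i,w_j) \notin R$ (otherwise $u_i$ would have $d+1$ distinct out-neighbours $v, w_1, \ldots, w_d$, contradicting its out-degree $d$). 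For strong connectivity the easy case is when $R-v$ is strongly connected, where any simple shortcut works since $(u,w)$ lies inside the strongly connected $R-v$ and $v$ retains in- and out-degree $d-1 \geq 1$; the hard case, when $v$ is a strong cut vertex, I intend to handle by a directed splitting-off argument in the spirit of Mader, exploiting that $R$ is Eulerian at $v$ so that some admissible pair of arcs at $v$ can be split off while preserving strong connectivity. In the degenerate residual case where the in- and out-neighbourhoods of $v$ coincide and induce a complete subdigraph (as happens, e.g., when $R = K_{d+1}$), I would fall back on the explicit decomposition of $R$ into $d$ arc-disjoint Hamilton cycles and directly remove one. Pinning down this case distinction cleanly is the main technical effort. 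Tightness of this inequality is witnessed by the same group instance as in Item~\ref{arcsatsp}, where $w(\ARCS{(d-1)}) = w(\ARCS d) = k$ for every $d \geq 2$.
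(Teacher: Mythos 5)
Items~\ref{arcsfact} and~\ref{arcsatsp} and the first inequality of Item~\ref{dir:monotone} are correct and essentially the paper's own argument (definitional inequality, Eulerian shortcutting, and splitting the bipartite representation of the $d$-factor into $d$ perfect matchings and dropping the cheapest); your tightness instances, including the $k$-group instance, are also fine. The genuine gap is in the second inequality of Item~\ref{dir:monotone}, $w(\ARCS{(d-1)}) \leq w(\ARCS d)$. You try to maintain \emph{strong} connectivity directly during the shortcutting process, and in the case where the current vertex $v$ is a strong cut vertex you defer to an unspecified Mader-style splitting-off argument, plus a fallback that decomposes a complete digraph into Hamilton cycles. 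As it stands this is not a proof: admissible splitting-off preserving strong connectivity is precisely the technical content you would still have to establish, and the fallback is false as stated, since the complete digraphs on $4$ and $6$ vertices admit no decomposition into arc-disjoint Hamilton cycles (Tillson); in that degenerate situation it would in any case suffice to delete a single Hamilton cycle, whose reverse cycle remains and keeps the graph strongly connected.

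The missing idea that makes the paper's proof short is an invariant you did not exploit: throughout the process every vertex keeps its in-degree equal to its out-degree, and any \emph{weakly} connected digraph with this balance property is automatically \emph{strongly} connected. Hence it suffices to preserve weak connectivity and simplicity, and your hard case evaporates. If $v$ is a cut vertex of the underlying undirected graph, strong connectivity of the current graph yields arcs $(x,v)$ and $(v,y)$ with $x$ and $y$ in different components of $R-v$; replacing them by $(x,y)$ cannot create a parallel arc (its endpoints lie in different components of $R-v$) and clearly keeps the graph weakly connected, hence strongly connected. If $v$ is not a cut vertex, a counting argument like the one you give yields a shortcut $(x,v),(v,y)\mapsto(x,y)$ with $(x,y)\notin R$, and weak connectivity is preserved because $R-v$ is connected and $v$ retains in- and out-degree $d-1\geq 1$. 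With this observation no splitting-off machinery and no Hamilton-decomposition fallback are needed.
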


\begin{proof}
The inequality of Item~\ref{arcsfact} holds because we optimize over a larger set to obtain
$\AFact d$.
It is tight, e.g., if all edge weights are equal.

The inequality of Item~\ref{arcsatsp} holds since \ARCS d\ is Eulerian and connected.
Thus, we can obtain a TSP tour by taking shortcuts and the triangle inequality
guarantees that this does not increase the weight.
It is tight, e.g., for the following instance: We have two clusters of $d+1$ vertices each,
and within each cluster, the weights are $0$. Between the clusters, the weights
are $1$. Both $\ATSP$ and $\ARCS d$ use only two edges between the two clusters,
one in each direction.
Thus, they have equal weight.

The first part of Item~\ref{dir:monotone} is straightforward, because $d$-regular directed graphs
stand in one-to-one correspondence to $d$-regular bipartite graphs, whose edges can be partitioned into $d$
perfect matchings~\cite[Lemma~1.4.17]{LPMatching}. Tightness holds if all edge weights are equal.

The tightness of the second inequality of Item~\ref{dir:monotone}
follows from the tightness of Item~\ref{arcsatsp} and the fact that $\ARCS 1 = \ATSP$.
Its proof is similar to the proof of Lemma~\ref{lem:undirected}\eqref{evenmonotone}:
Let $R$ be a minimum-weight connected $d$-factor. We iteratively decrease indegree and outdegree of every vertex by one.
The invariant that we use is that the graph is strongly connected. This holds for every weakly connected directed graph
where indegree equals outdegree at every vertex.

Consider a vertex $v$ with indegree $d$ and outdegree $d$. First, assume that $v$ is a cut vertex, and let $C_1,\ldots, C_k$
be the components obtained by removing $v$. Since $R$ is strongly connected,
there is an $x \in C_1$ and $y \in C_2$ with $(x,v), (v, y) \in R$.
We remove these two edges and add $(x,y)$. The resulting graph is weakly connected and thus strongly connected.

If $v$ is not a cut vertex, let $y$ be arbitrary with $(v,y) \in R$. Since $y$ has indegree at most $d$,
there must be an $x$ with $(x,v) \in R$ and $(x,y) \notin R$. We replace $(x,v)$ and $(v,y)$ by $(x,y)$.
Since $v$ is no cut vertex, the resulting graph is connected.
We iterate this process until we have a $(d-1)$-regular connected graph. By the triangle inequality,
its weight is at most $w(\ARCS d)$. 
\end{proof}

\section{Approximation Algorithms}
\label{sec:approx}

\subsection{3-Approximation for \MinRCS d\ and \MinRTwoCS d}
\label{ssec:threeapx}

The 3-approximation that we present in this section works for all $d$, odd or even. It also works for $d$ growing as a function of $n$.
An interesting feature of this algorithm, and possibly an indication that a better approximation ratio is possible for \MinRCS d,
is that the same algorithm provides an approximation ratio of 3 for both
\MinRCS d\ and \MinRTwoCS d. In fact, we compute a 2-edge-connected $d$-regular graph
that weighs at most three times the weight of the optimal connected $d$-regular graph.

First we make some preparatory observations on 2-edge-connectedness.
Given a connected graph $G=(V,E)$, we can create a tree $T(G)$ as follows: We have a vertex for every maximal subgraph of $G$
that is 2-edge-connected (called a 2-edge-connected component), and two such vertices are connected if the corresponding components are connected
in $G$. In this case, they are connected by a bridge edge.
Now consider a leaf of tree $T(G)$ and its corresponding 2-edge-connected component $C$. Since
$C$ is a leaf in $T(G)$, it is only incident to a single bridge edge $e$ in $G$.
Now assume that $G$ is $d$-regular with $d\geq 3$ odd (for $d=2$, any connected graph is also 2-edge-connected).
Let $u$ be the vertex of $C$ that is incident to $e$. Then $u$ must be incident to $d-1$ other vertices in $C$.
Thus, $C$ has at least $d$ vertices. Since the $d-1$ neighbors of $u$ are not incident to bridge edges, they must be adjacent
to other vertices in $C$. Since $G$ is $d$-regular, $C$ has at least $d+1$ vertices and more than $d^2/2 > d$ edges.
Therefore, there exists an edge $e'$ in $C$ that is not incident to $u$, i.e., $e'$ does not share an endpoint with a bridge edge.

If $G$ is not connected, we have exactly the same properties with ``tree'' replaced by ``forest''.

To simplify notation in the algorithm, let $k=k(G)$ denote the number of 2-edge-connected
components of $G$ that are leaves in the forest described above,
and let $L_1(G), \ldots, L_{k}(G)$ denote the 2-edge-connected components of a graph $G$
that correspond to leaves in the tree described above. For such an $L_i(G)$, let $e_i(G)$ denote an edge that is not adjacent
to a bridge edge in $G$. The choice of $e_i(G)$ is arbitrary.

\begin{algorithm}[t]
\SetKwInOut{Input}{input}\SetKwInOut{Output}{output}
\Input{undirected complete graph $G=(V,E)$, edge weights $w$, $d \geq 2$}
\Output{2-edge-connected $d$-factor $R$ of $G$}

compute a minimum-weight $d$-factor $\Fact d$\ of $G$;

$k \leftarrow k(\Fact d)$\;

$Q \leftarrow \{e_1, \ldots, e_{k}\}$ with
$e_i = e_i(\Fact d) = \{u_i, v_i\}$\; \label{treeconstruction}

compute \MST\ of $G$;\label{dt1}

duplicate each edge of \MST\ and take shortcuts to obtain a Hamiltonian cycle $H$\;\label{dt2}

take shortcuts to obtain from $H$ a Hamiltonian cycle $H'$ through $\{u_1, \ldots, u_k$\}, assume w.l.o.g.\
that $H'$ traverses the vertices in the order $u_1, \ldots, u_k, u_1$\;

obtain $R$ from $\Fact d$ by adding the edges $\{u_i, v_{i+1}\}$ (with $k+1 = 1$) and removing~$Q$\;
\caption{3-approximation for \MinRCS d\ and \MinRTwoCS d.}
\label{threeapx}
\end{algorithm}

We prove that Algorithm~\ref{threeapx} is a 3-approximation for both \MinRCS d\ and \MinRTwoCS d\ by a series of lemmas.
Since the set of vertices is fixed, we sometimes identify graphs with their edge set. In particular,
$R$ denotes both the connected $d$-factor that we compute and its edge set.

\begin{lemma} \label{lem:spanning}
Assume that $R$ is computed as in Algorithm~\ref{threeapx}.
Then $R$ is a $d$-regular spanning subgraph of $G$.
\end{lemma}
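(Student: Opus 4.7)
My plan is to verify directly that $R$, obtained from $\Fact d$ by removing the $k$ edges of $Q$ and adding the $k$ edges $\{u_i, v_{i+1}\}$ (with indices taken modulo $k$), inherits both the vertex set and the degree at every vertex from $\Fact d$. Since $\Fact d$ is $d$-regular and spanning by definition, $R$ has vertex set $V$ and is therefore spanning; the real content of the lemma is $d$-regularity.

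The key preparatory fact I would invoke is that the 2-edge-connected components of any graph partition its vertex set, so the leaf components $L_1(\Fact d), \ldots, L_k(\Fact d)$ are pairwise vertex-disjoint. In particular, the vertices $u_1, v_1, \ldots, u_k, v_k$ satisfy $u_i \neq v_i$ (because $e_i$ is an edge of $\Fact d$) and lie in pairwise disjoint components for distinct indices, so they are all distinct.

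Using this disjointness, I would track the degree change at each vertex. Any vertex outside $\bigcup_i V(L_i)$ is touched by no edge of $Q$ and no added edge, so its degree stays at $d$. For a vertex $u_i$ (the argument for $v_i$ is symmetric), disjointness implies that the only edge of $Q$ incident to $u_i$ is $e_i$, so removing $Q$ decreases its degree by one. Among the added edges $\{u_j, v_{j+1}\}$, the disjointness implies that $u_i$ can only equal $u_j$ when $j=i$; the alternative $u_i = v_{j+1}$ would force $j+1 = i$ and hence $u_i = v_i$, contradicting that $e_i$ is an edge. So exactly one added edge, namely $\{u_i, v_{i+1}\}$, is incident to $u_i$.

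The main obstacle is to guarantee that adding $\{u_i, v_{i+1}\}$ really raises the degree of $u_i$ by one, i.e., that this edge is not already present in $\Fact d \setminus Q$. This is precisely why the algorithm insists that $e_i$ is not adjacent to any bridge edge: if $\{u_i, v_{i+1}\}$ were already in $\Fact d$, then since its endpoints lie in distinct 2-edge-connected components it would be a bridge, and $e_i$ would share the endpoint $u_i$ with it, contradicting the choice of $e_i$. With this verified, each $u_i$ and each $v_i$ loses exactly one edge and gains exactly one edge, so every vertex of $R$ has degree exactly $d$, completing the proof.
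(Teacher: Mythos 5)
Your proof is correct and follows essentially the same route as the paper's: both arguments count the degree changes (each $u_i$ and $v_i$ loses one edge from $Q$ and gains one added edge) and both hinge on the observation that $\{u_i, v_{i+1}\}$ cannot already be present in $\Fact d$, since it would then be a bridge edge sharing an endpoint with $e_i$, contradicting the choice of $e_i$ in Line~\ref{treeconstruction}. Your explicit use of the vertex-disjointness of the leaf components to show all $u_i, v_i$ are distinct just spells out details the paper leaves implicit.
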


\begin{proof}
If $R$ does not contain multiple edges between the same pair of vertices, then $R$ is $d$-regular since
we obtain $R$ from a $d$-factor, remove one edge incident to each $u_i$ and $v_i$, and add one edge incident to each $u_i$ and $v_i$.
We now show that indeed we have that $R$ does not contain multiple edges between the same pair of vertices. This can only happen if some edge $e = \{u_i, v_{i+1}\}$
is present in $\Fact d$. Since $e$ connects two 2-edge-connected components, this can only happen
if $e$ is a bridge edge. This is not the case as none of the vertices $u_i$ and $v_i$ are incident to a bridge edge
in \Fact d\ by the choice of the edges $e_i$ in Line~\ref{treeconstruction} of Algorithm~\ref{threeapx}.
\end{proof}

\begin{lemma} \label{lem:2ec}
Assume that $R$ is computed as in Algorithm~\ref{threeapx}.
Then $R$ is 2-edge-connected.
\end{lemma}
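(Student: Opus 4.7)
The plan is to prove the lemma directly by showing that $R \setminus \{e\}$ is connected for every edge $e \in R$. I first fix notation as in Algorithm~\ref{threeapx}: let $L_1, \ldots, L_k$ denote the leaf 2-edge-connected components of $F = \Fact d$, let $e_i = \{u_i, v_i\}$ be the chosen edge of $L_i$ (not incident to any bridge of $F$, which is possible by the preliminary observations above), and let $w_i \in L_i \setminus \{u_i, v_i\}$ be the endpoint in $L_i$ of the bridge $b_i$ incident to $L_i$, when such a bridge exists. Because $L_i$ is 2-edge-connected, $A_i := L_i \setminus \{e_i\}$ is connected; together with the added edges $\{u_j, v_{j+1}\}$ (indices mod $k$) the internal $v_j$-$u_j$ paths in the $A_j$'s assemble into a closed walk visiting every leaf, which I will call the \emph{leaf cycle}.

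I would then case-split on the location of $e$. Three subcases are routine. If $e$ lies inside a non-leaf 2-edge-connected component $C$ of $F$, then $C \setminus \{e\}$ remains connected and nothing else is disturbed. If $e$ is a bridge of $F$, removing it splits $T(F)$ into two subtrees, each of which must contain at least one leaf (the corner case $k \leq 1$ forces $F$ itself to be 2-edge-connected and $R = F$), so the leaf cycle reconnects them. If $e$ is an added edge, then already $F \setminus Q$ is connected (each $A_i$ is connected and the bridge forest is intact), so losing a single added edge is harmless.

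The substantive subcase is when $e$ is an internal edge of some leaf $L_i$ with $e \neq e_i$. If $L_i \setminus \{e_i, e\}$ is still connected, the first routine subcase applies verbatim. Otherwise $\{e_i, e\}$ forms a 2-edge cut of $L_i$ splitting it into two sides $A$ and $B$; because $e_i = \{u_i, v_i\}$ crosses the cut, the endpoints must satisfy $u_i \in A$ and $v_i \in B$ up to relabelling. Then in $R \setminus \{e\}$ the added edge $\{u_i, v_{i+1}\}$ attaches $A$ to $L_{i+1}$, while $\{u_{i-1}, v_i\}$ attaches $B$ to $L_{i-1}$, and following the remaining portion of the leaf cycle through the untouched components $A_j$ ($j \neq i$) produces a path from $A$ to $B$ in $R \setminus \{e\}$; when $k = 2$ the two arcs meet at the unique other leaf.

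The main obstacle is exactly this last subcase, because an internal 2-edge cut of $L_i$ could a priori isolate one side from both the bridge $b_i$ and the added edges. What saves the argument is the algorithmic choice of $e_i$: its endpoints double as the attachment points for the two added edges incident to $L_i$, so any 2-edge cut of $L_i$ containing $e_i$ is forced to place these attachment points on opposite sides, and the leaf cycle then supplies the missing detour between the two pieces.
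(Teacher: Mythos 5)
Your overall strategy coincides with the paper's: a case analysis over the deleted edge, with the only substantive case being an edge $e\neq e_i$ inside a leaf component $L_i$, where the key observation is that if $\{e_i,e\}$ disconnects $L_i$ then $e_i$ must cross the resulting cut, so $u_i$ and $v_i$ land on opposite sides and the cycle through the leaves supplies the detour. This is exactly the paper's argument (stated there more tersely), and your treatment of that case is fine; do add the one-line reason why $e_i$ crosses the cut -- if both its endpoints lay on the same side, then deleting $e$ alone would already disconnect the 2-edge-connected graph $L_i$ -- since at the moment you only assert that the cut ``is forced'' to separate the attachment points.

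There is, however, one step that fails as written: in the case where the deleted edge is an added edge $\{u_i,v_{i+1}\}$ you argue that ``already $\Fact d\setminus Q$ is connected.'' That is false in general: the $d$-factor computed in the first line of Algorithm~\ref{threeapx} need not be connected (this is precisely why the algorithm exists), and removing $Q$ keeps each of its connected components connected but does not merge distinct components. So your justification for this case only covers the degenerate situation in which $\Fact d$ happens to be connected. The conclusion is nevertheless true and is recovered immediately by your own leaf-cycle device: after deleting one added edge, the remaining $k-1$ added edges together with the internal $v_j$--$u_j$ paths in the $A_j$'s still form a walk visiting every leaf, and every vertex reaches some leaf inside its own component of $\Fact d\setminus Q$ (each $A_j$ is connected, all bridges are intact, and every component of the factor contains a leaf). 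With that repair -- which is how the paper argues this case, via the connection from $u_i$ to $v_{i+1}$ through $u_{i+1},\ldots,u_{i-1}$ -- your proof is complete and essentially identical to the paper's.
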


\begin{proof}
First, we observe that $R$ is connected: We do not remove any bridge edges and we remove at most one edge per 2-edge-connected component of $\Fact d$. Furthermore, the 2-edge-connected components that are not already connected in $\Fact d$ are connected via $H'$. 
To show that $R$ is 2-edge-connected, we show that the removal of a single edge does not disconnect the graph.

If we remove an edge $(u_i, v_{i+1})$, then we still have a connection between $u_i$ and $v_{i+1}$ via $u_{i+1}, u_{i+2}, \ldots, u_{i-2}, u_{i-1}$.
If we remove a bridge edge $e=(u,v)$ of \Fact d, then both $u$ and $v$ must each be connected to at least one 2-edge-connected component of $\Fact d$. Since those 2-edge-connected components are also connected through $H'$, $R$ remains to be connected:
First, if we remove a non-bridge edge $e$ within a 2-edge-connected component $C$ of \Fact d, then this also does not disconnect the graph. If $C$ does not correspond to a leaf in $T(\Fact d)$, then it is still 2-edge-connected in $R$.
Thus removing one edge does not disconnect the graph.
Second, if $C$ is a leaf in $T(\Fact d)$, then an edge $e=(u,v)$ is removed from $C$ in $R$. If removing another edge separates $C$ in two components, then $u$ and $v$ must be in separate components, but then these components are still connected through $H'$. Thus,
$R$ remains to be connected also in this case.
\end{proof}

\begin{lemma}
Assume that $R$ is computed as in Algorithm~\ref{threeapx}.
Then $w(R) \leq 3\cdot w(\RCS d) \leq 3\cdot w(\RTwoCS d)$.
\end{lemma}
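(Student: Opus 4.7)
The plan is to bound $w(R)$ directly from the algorithm's construction, namely
\begin{equation*}
w(R) = w(\Fact d) - w(Q) + \sum_{i=1}^k w(\{u_i, v_{i+1}\}),
\end{equation*}
and to estimate the three pieces using the comparisons in Lemma~\ref{lem:undirected}.

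First I would handle the added edges $\{u_i, v_{i+1}\}$, which are the subtle term since they couple $\Fact d$ (which determines the $v_i$) with $H'$ (which determines the cyclic order of the $u_i$). The idea is to split each such edge via the triangle inequality, using $u_{i+1}$ as the intermediate vertex:
\begin{equation*}
w(\{u_i, v_{i+1}\}) \leq w(\{u_i, u_{i+1}\}) + w(\{u_{i+1}, v_{i+1}\}) = w(\{u_i, u_{i+1}\}) + w(e_{i+1}).
\end{equation*}
Summing over $i$ cyclically, the first terms telescope to exactly $w(H')$ (since by construction $H'$ visits the $u_i$ in the order $u_1, \ldots, u_k, u_1$) and the second terms sum to $w(Q)$. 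Plugging this back, the two occurrences of $w(Q)$ cancel and we are left with $w(R) \leq w(\Fact d) + w(H')$.

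The second step bounds each of these two quantities separately. By Lemma~\ref{lem:undirected}\eqref{factrcs}, $w(\Fact d) \leq w(\RCS d)$. For $w(H')$, the classical double-tree shortcut in lines~\ref{dt1}--\ref{dt2} (together with the further shortcut that produces $H'$ from $H$) yields $w(H') \leq w(H) \leq 2 \cdot w(\MST)$, and then Lemma~\ref{lem:undirected}\eqref{mstrcs} gives $w(\MST) \leq w(\RCS d)$. Combining, $w(R) \leq 3 \cdot w(\RCS d)$, and the final inequality $w(\RCS d) \leq w(\RTwoCS d)$ is the other half of Lemma~\ref{lem:undirected}\eqref{mstrcs}.

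No real obstacle is expected: all the heavy lifting is done by Lemma~\ref{lem:undirected} and by the standard double-tree bound. The one point that requires care is the telescoping step, where the middle vertex chosen in the triangle inequality must be precisely the successor of $u_i$ along $H'$; this is exactly what the algorithm's relabeling of $H'$ guarantees, after which everything falls into place.
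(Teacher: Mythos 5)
Your proposal is correct and follows essentially the same route as the paper's proof: the same triangle-inequality split $w(\{u_i,v_{i+1}\}) \leq w(\{u_i,u_{i+1}\}) + w(\{u_{i+1},v_{i+1}\})$, cancellation of $w(Q)$, and the bounds $w(\Fact d) \leq w(\RCS d)$, $w(H') \leq w(H) \leq 2\cdot w(\MST) \leq 2\cdot w(\RCS d)$, with the final inequality coming from Lemma~\ref{lem:undirected}\eqref{mstrcs}. No gaps.
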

\begin{proof}
The second inequality follows from Lemma~\ref{lem:undirected}\eqref{mstrcs}.
For the first inequality, we observe that $w(\MST) \leq w(\RCS d)$ (Lemma~\ref{lem:undirected}\eqref{mstrcs})
and $w(\Fact d) \leq w(\RCS d)$. Also, $w(\{u_i, v_{i+1}\}) \leq w(\{u_i, u_{i+1}\})
 + w(\{u_{i+1}, v_{i+1}\})$ by the triangle inequality.
We have
\begin{align*}
 w(R) & \leq \underbrace{w(H') + w(Q)}_{\text{adding the $\{u_i, v_{i+1}\}$'s}}
 - \underbrace{w(Q)}_{\text{removing $Q$}} + w(\Fact d) \\
& \leq w(H) + w(\Fact d) \leq 2\cdot w(\MST) + w(\Fact d) \leq 3\cdot w(\RCS d).
\end{align*}
The second-to-last inequality holds since we can obtain a TSP tour by duplicating all edges of an MST and taking shortcuts.
\end{proof}

The following theorem is an immediate consequence of the lemmas above.

\begin{theorem}
\label{thm:threeapx}
For all $d$, Algorithm~\ref{threeapx} is a polynomial-time 3-approximation for \MinRCS d\ and \MinRTwoCS d. This includes the case that $d$ is a function of $n$.
\end{theorem}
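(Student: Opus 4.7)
The plan is essentially mechanical: the theorem glues together the three preceding lemmas plus a polynomial-time check on the algorithm. First I would combine Lemma~\ref{lem:spanning} and Lemma~\ref{lem:2ec} to conclude that the output $R$ is a 2-edge-connected $d$-regular spanning subgraph of $G$, which in particular is a connected $d$-factor. Hence $R$ is simultaneously feasible for both \MinRCS d\ and \MinRTwoCS d. The weight bound from the third lemma above then yields $w(R) \leq 3 \cdot w(\RCS d) \leq 3 \cdot w(\RTwoCS d)$, which delivers the $3$-approximation ratio against the optimum of either problem at once.

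The remaining task is a routine line-by-line verification that Algorithm~\ref{threeapx} runs in time polynomial in $n$ irrespective of the value of $d$. The minimum-weight $d$-factor in Line~1 is obtained via Tutte's reduction to minimum-weight perfect matching, whose running time is polynomial in $n$ (the reduction blows up the graph by a factor polynomial in $d \leq n$). The 2-edge-connected components and the bridge edges of $\Fact d$ used in Lines~2--3 can be found in linear time by standard bridge-finding, and the choice of the non-bridge edge $e_i$ inside each leaf component is arbitrary. The \MST\ in Line~4 and its conversion to a Hamiltonian cycle by edge doubling and shortcutting in Lines~5--6 are classical polynomial-time constructions, and the final edge swap in Line~7 costs $O(n)$.

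The one point worth flagging as a mild obstacle is that the whole analysis must work uniformly in $d$. Inspection of Lemmas~\ref{lem:spanning} and~\ref{lem:2ec} and of the weight bound shows that none of them relies on $d$ being constant: they only use structural properties of $\Fact d$, $\MST$, and the cycle $H'$, all of which are available regardless of how $d$ scales. Consequently the 3-approximation guarantee, together with the polynomial running time, applies verbatim when $d$ is allowed to grow with $n$, which is exactly the final clause of the theorem.
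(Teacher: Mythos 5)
Your proposal is correct and follows exactly the paper's route: the theorem is stated there as an immediate consequence of Lemma~\ref{lem:spanning}, Lemma~\ref{lem:2ec}, and the weight bound $w(R) \leq 3\cdot w(\RCS d) \leq 3\cdot w(\RTwoCS d)$, just as you combine them. Your explicit verification of the polynomial running time (Tutte's reduction, bridge finding, the double-tree construction) is a harmless elaboration of what the paper leaves implicit.
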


\begin{remark}
If we are only interested in a 3-approximation for \MinRCS d and not for \MinRTwoCS d, then we can simplify Algorithm~\ref{threeapx}
a bit: we only pick one non-bridge edge for each component and not for every 2-edge-connected component. The rest of the algorithm and its analysis remain the same. However, this does not seem to improve the worst-case approximation ratio.
\end{remark}

\begin{remark}
The analysis is tight in the following sense: By Lemma~\ref{lem:undirected}\eqref{onetwo}, a minimum-weight 2-edge-connected $d$-factor
can be three times as heavy as a minimum-weight connected $d$-factor. Thus, any algorithm that outputs a 2-edge-connected $d$-factor
cannot achieve an approximation ratio better than $3$.
Furthermore, since $w(\MST) \leq w(\RTwoCS d)$ and $w(\Fact d) \leq w(\RTwoCS d)$ are tight (Lemma~\ref{lem:undirected}\eqref{mstrcs} and \eqref{factrcs}),
the analysis is essentially tight. If we only require connectedness and not 2-edge-connectedness,
we see that the analysis cannot be improved since $w(\TSP) \leq 2 w(\RCS d)$ and $w(\Fact d) \leq w(\RCS d)$ are tight.

However, it is reasonable to assume that not all these inequalities can be tight at the same time and, in addition, shortcutting of the duplicated MST to obtain a TSP tour through $u_1,\ldots,u_k$
does not yield an improvement. Therefore, it might be possible to improve the analysis and show that Algorithm~\ref{threeapx} achieves a better approximation ratio than 3.
\end{remark}

\begin{remark}
Lines~\ref{dt1} and~\ref{dt2} of Algorithm~\ref{threeapx} are in fact simply the double-tree heuristic for \MinTSP~\cite[Section 2.4]{Shmoys2011}.
One might be tempted to construct a better tour using Christofides' algorithm~\cite[Section 2.4]{Shmoys2011}, which achieves
a ratio of $3/2$ instead of only $2$. However, in the analysis we compare the optimal solution for \MinRCS d to the MST,
and we know that $w(\MST) \leq w(\RCS d) \leq w(\RTwoCS d)$. If we use Christofides' algorithm directly, we have to compare
a TSP tour to the minimum-weight connected $d$-factor. In particular for odd $d$,
we have that for some instances $w(\TSP) \geq (\frac 43 - o(1)) \cdot w(\RTwoCS d) \geq (\frac 43 - o(1)) \cdot w(\RCS d)$ (Lemma~\ref{lem:undirected}\eqref{fourthird}).
Even if this is the true bound -- as it is for $d=3$ (Lemma~\ref{lem:undirected}\eqref{fourthird3}) --, the TSP tour constructed
contributes with a factor $3/2$ times $4/3$, which equals $2$, to the approximation ratio, which is no improvement.
\end{remark}

\subsection{$(r+1)$-Approximation}
\label{ssec:tspapprox}

In this section, we give an $(r+1)$-approximation for \MinRCS d for even values of $d$ and \MinARCS d for all values of $d$. Here, $r$
is the ratio within which $\MinTSP$ (for \MinRCS d) or \MinATSP\ (for \MinARCS d) can be approximated. This means that
we currently have $r = 3/2$ for the symmetric case by Christofides' algorithm~\cite[Section 2.4]{Shmoys2011}
and, for the asymmetric case, we have either $r = O(\log n/\log\log n)$ if we use the randomized
algorithm by Asadpour et al.~\cite{ATSPolog} or $r= \frac 23 \cdot \log_2 n$ if we use Feige and Singh's deterministic
algorithm~\cite{FeigeSingh}. Although the algorithm is a simple modification of Algorithm~\ref{threeapx}, we summarize it as Algorithm~\ref{rplus1}
for completeness.

\begin{algorithm}[t]
\SetKwInOut{Input}{input}\SetKwInOut{Output}{output}
\Input{undirected or directed complete graph $G=(V,E)$, edge weights $w$, $d$}
\Output{connected $d$-factor $R$ of $G$}

compute a minimum-weight $d$-factor $C$ of $G$\;

let $C_1, \ldots, C_k$ be the connected components of $C$, and let $e_i = (u_i, v_i)$ be any edge of $C_i$

compute a TSP tour $H$ using an approximation algorithm with ratio $r$\;

take shortcuts to obtain from $H$ a TSP tour $H'$ through $\{u_1, \ldots, u_k\}$, assume w.l.o.g.\
that $H'$ traverses the vertices in the order $u_1, \ldots, u_k, u_1$\;

obtain $R$ from $C$ by adding the edges $(u_i, v_{i+1})$ (with $k+1 = 1$) and removing $e_1, \ldots, e_k$\; \label{removeadd}
\caption{$(r+1)$-approximation for \MinRCS d\ for even $d$ and \MinARCS d.}
\label{rplus1}
\end{algorithm}

\begin{theorem}
If \MinTSP\ can be approximated in polynomial time within a factor of $r$, then Algorithm~\ref{rplus1} is a polynomial-time $(r+1)$-approximation
for \MinRCS d\ for all even~$d$.

If \MinATSP\ can be approximated in polynomial time within a factor of $r$, then Algorithm~\ref{rplus1} is a polynomial-time $(r+1)$-approximation
for \MinARCS d\ for all~$d$.

The results still hold if $d$ is part of the input.
\end{theorem}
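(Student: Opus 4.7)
The plan is to mimic the three-step template of Theorem~\ref{thm:threeapx}, but in a streamlined form, since Algorithm~\ref{rplus1} produces a TSP tour directly instead of going via an MST and the double-tree heuristic. I would establish in turn that the output $R$ is $d$-regular, that $R$ is connected, and that $w(R) \leq (r+1)\cdot w(\RCS{d})$ (respectively $w(R) \leq (r+1)\cdot w(\ARCS{d})$).

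Regularity is immediate from the swap in Line~\ref{removeadd}: each $u_i$ loses $e_i$ and gains $(u_i,v_{i+1})$, each $v_i$ loses $e_i$ and gains $(u_{i-1},v_i)$, and all other vertices are untouched. No parallel edges arise, because each added edge crosses two distinct components of $C$ and therefore cannot already belong to $C$.

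The conceptually important step is connectedness, which I would establish using an observation in the spirit of those driving Lemma~\ref{lem:undirected}\eqref{tspeven} and Lemma~\ref{lem:directed}\eqref{arcsatsp}: each component of $C$ is Eulerian — in the undirected case because $d$ is even and hence all degrees are even, in the directed case because indegree equals outdegree at every vertex. A connected Eulerian undirected graph is 2-edge-connected, so removing $e_i$ leaves $C_i$ internally connected. A weakly connected Eulerian digraph admits an Eulerian circuit, and breaking that circuit at $e_i=(u_i,v_i)$ yields a directed walk from $v_i$ back to $u_i$ inside $C_i\setminus\{e_i\}$. In either case, the cyclic chain of added edges $(u_i,v_{i+1})$ then stitches the modified components into a single connected graph on $V$.

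For the weight, I would apply the triangle inequality $w(u_i,v_{i+1})\leq w(u_i,u_{i+1})+w(e_{i+1})$ and sum over $i$ to obtain
\[ w(R) \leq w(C) + w(H') \leq w(\Fact{d}) + w(H) \leq w(\Fact{d}) + r\cdot w(\TSP) \leq (r+1)\cdot w(\RCS{d}), \]
invoking $w(\Fact{d})\leq w(\RCS{d})$ (Lemma~\ref{lem:undirected}\eqref{factrcs}) and $w(\TSP)\leq w(\RCS{d})$ (Lemma~\ref{lem:undirected}\eqref{tspeven}); the latter is exactly where evenness of $d$ enters. The directed case is identical, with Lemma~\ref{lem:directed}\eqref{arcsfact} and~\eqref{arcsatsp} in place of their undirected counterparts; both hold for all $d$, which is why there is no parity restriction in the directed setting. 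The main obstacle I expect is the directed connectedness argument: deleting an arc from a strongly connected Eulerian digraph need not preserve strong connectivity, so one must carefully invoke the Eulerian-trail statement to produce the $v_i$-to-$u_i$ walk inside $C_i\setminus\{e_i\}$. Everything else is bookkeeping, including polynomial runtime when $d$ is part of the input, which follows from Tutte's reduction in the undirected case and from bipartite matching in the directed case.
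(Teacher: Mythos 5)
Your proposal is correct and is essentially the paper's own argument: $d$-regularity follows from the swap in Line~\ref{removeadd}, connectedness follows because each component of the minimum-weight $d$-factor is 2-edge-connected (undirected, $d$ even) respectively Eulerian and hence strongly connected (directed), and the weight bound is the same chain $w(R)\le w(C)+w(H')\le w(\Fact d)+r\cdot w(\TSP)\le (r+1)\cdot w(\RCS d)$ via Lemma~\ref{lem:undirected}\eqref{factrcs},\eqref{tspeven} and Lemma~\ref{lem:directed}\eqref{arcsfact},\eqref{arcsatsp}. Your Eulerian-trail argument for the directed components is a harmless variant of the paper's observation that a strongly connected component stays weakly connected after deleting one arc; the only slip is your closing remark that evenness of $d$ enters solely through $w(\TSP)\le w(\RCS d)$, since it is also needed (and you in fact used it) to get 2-edge-connectedness of the undirected components.
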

\begin{proof}
Let $T$ be an optimal TSP tour, and let $O$ be an optimal connected $d$-factor. Let $C$ be a minimum-weight $d$-factor, as computed
by Algorithm~\ref{rplus1}.
We have $T = \ATSP$, $O = \ARCS d$, and $C = \AFact d$ if the input graph is asymmetric
and $T = \TSP$, $O = \RCS d$, and $C = \Fact d$ if the input graph is symmetric.

By Lemma~\ref{lem:undirected} and Lemma~\ref{lem:directed}, we have
$w(T) \leq w(O)$ and $w(C) \leq w(O)$. Removing and adding edges as in Line~\ref{removeadd} of Algorithm~\ref{rplus1}
yields again a $d$-factor. For the asymmetric case, any component is strongly connected. After removal of one edge per component,
it is still weakly connected. For the symmetric case, any component is 2-edge-connected.
Thus, the removal of edges in Line~\ref{removeadd} does not split any component. Hence, the addition of edges
in Line~\ref{removeadd} yields a connected $d$-factor $R$.
By the triangle inequality, we have $w(R) \leq w(C) + w(H') \leq w(C) + w(H)$.
Since we use an $r$-approximation to obtain $H$, we thus have
$w(R) \leq w(C) + r w(T) \leq (r+1) \cdot w(O)$.
\end{proof}

\section{Hardness Results}
\label{sec:hardness}

\subsection{\TSP-Inapproximability}
\label{ssec:tspinapprox}

In this section, we prove that \MinRCS d\ cannot be approximated better than \MinTSP.

\begin{theorem}
\label{thm:mininapx}
For every $d \geq 2$, if \MinRCS d\ can be approximated in polynomial time within a factor of $r$, then \MinTSP\ can be approximated in polynomial time
within a factor of $r$.
\end{theorem}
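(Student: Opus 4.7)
The case $d=2$ is trivial since \MinRCS 2\ coincides with \MinTSP, so I will focus on $d\geq 3$ and construct an approximation-preserving reduction from \MinTSP\ to \MinRCS d. Given a \MinTSP\ instance $(K_n,w)$, the plan is to blow up every vertex into a cluster: build a complete graph $G'$ on vertex set $V' = \bigcup_{v\in V} C_v$, where each cluster $C_v$ contains $d+1$ fresh vertices. I will set every intra-cluster edge to weight $0$ and, for $u\in C_x$ and $u'\in C_y$ with $x\neq y$, set $w'(\{u,u'\}) = w(\{x,y\})$. Triangle inequality on $w'$ is routine: when the three endpoints lie in three distinct clusters it reduces to triangle inequality on $w$, and otherwise at least one of the three weights is zero and the inequality is trivial.

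The core claim is that $w(\mathrm{OPT}_{\MinRCS d}(G')) = w(\mathrm{OPT}_{\MinTSP}(K_n))$. For $\leq$, I would take an optimal TSP tour $v_1v_2\cdots v_nv_1$ and, in each cluster $C_{v_i}$, designate two ``port'' vertices $a_i,b_i$. Include all intra-cluster edges except $\{a_i,b_i\}$ and add the external edges $\{b_i,a_{i+1}\}$ (indices mod $n$). The ports then have intra-cluster degree $d-1$ plus one external edge, while non-ports have intra-cluster degree $d$; connectivity holds because $K_{d+1}$ minus one edge is connected and the clusters are linked in a cycle. The total weight equals the TSP weight.

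The more delicate direction is $\geq$, which simultaneously provides the shortcutting step used to transfer approximation. Given any connected $d$-factor $F$ of $G'$, contract each cluster to obtain a multigraph $F'$ on $V$ whose edges are exactly the inter-cluster edges of $F$. Connectedness of $F$ implies connectedness of $F'$. The key parity observation is that the sum of degrees in $C_v$ is $(d+1)d$, which is even for every $d$, so the number of inter-cluster edges incident to $C_v$ is also even; hence every super-vertex of $F'$ has even degree and $F'$ is Eulerian. Take an Eulerian circuit, shortcut it to visit each super-vertex exactly once (invoking the triangle inequality on $w$), and output the resulting Hamiltonian cycle on $V$. Its weight is at most $w(F')$, which equals $w(F)$ because intra-cluster edges cost nothing.

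Putting these together, if $A$ is an $r$-approximation for \MinRCS d, I would run it on $G'$ to obtain $F$ with $w(F)\leq r\cdot w(\mathrm{OPT}_{\MinRCS d}(G')) = r\cdot w(\mathrm{OPT}_{\MinTSP}(K_n))$, then apply the shortcutting above to obtain a TSP tour $T$ with $w(T)\leq w(F)$. The whole pipeline is polynomial since $|V'|=n(d+1)$. The main obstacle I expect is the parity/Eulerian step: I must make sure clusters have size $d+1$ (rather than, say, $d$) precisely so that each super-vertex ends up with even external degree regardless of the parity of $d$, and I must check that the subsequent shortcutting of the Eulerian circuit really produces a valid Hamiltonian cycle on $V$ of no greater weight, which is exactly where triangle inequality on the original instance is used.
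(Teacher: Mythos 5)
Your proposal is correct and follows essentially the same route as the paper's proof: the identical blow-up of each vertex into a $(d+1)$-clique with zero-weight internal edges, the same port-based construction mapping a tour to a connected $d$-factor of equal weight, and the same parity argument ($(d+1)d$ even, hence even external degree per cluster) yielding an Eulerian multigraph that is shortcut to a tour of no greater weight. The only cosmetic differences are your explicit check of the triangle inequality on the new weights and the separate (unnecessary) treatment of $d=2$.
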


\begin{proof}
We show that $\MinRCS d$ can be used to approximate \MinTSP.
Let the instance of \MinTSP\ be given by a complete graph $G=(V,E)$ and edge weights $w = (w_e)_{e \in E}$ that satisfy the triangle inequality. Let $n = |V|$. We construct an instance of \MinRCS d as follows: The instance consists of a complete graph $H = (V', E')$. Here $V' = \bigcup_{v \in V} V_v$, where $V_v = \{v_1, v_2, \ldots, v_{d+1}\}$, i.e., $H$ contains $(d+1) \cdot n$ vertices.
We assign edge weights $\tilde w$ as follows:
\begin{itemize}
\item $\tilde w_{\{v_i, v_j\}} = 0$ for all $v \in V$, $i \neq j$,
\item $\tilde w_{\{u_i, v_j\}} = w_{\{u,v\}}$ for all $u \neq v$, $i$ and $j$.
\end{itemize}
Every TSP tour $T$ of $G$ maps to a connected $d$-factor $R$ of $H$ of the same weight:
We give $T$ an orientation. For an edge from $u$ to $v$ in $T$, we include $\{u_1,v_2\}$ in $R$.
Adding all edges except $\{v_1,v_2\}$ to $R$ within each $V_v$ yields a connected $d$-factor $R$. Clearly, $\tilde w(F) = w(T)$.

Now assume that we have a connected $d$-factor $R$ of $H$. We claim that we can construct a TSP tour $T$ of
$G$ with $w(T) \leq \tilde w(R)$.
We construct a multiset $T'$ of edges of $G$ as follows: For each edge $\{u_i,v_j\}$ of $R$, if $u\neq v$, we add an edge $\{u,v\}$ to $T'$. Otherwise, if $u=v$, we ignore the edge. The sum of the degrees in $R$ of all vertices in each set $V_v$ is equal to $(d+1)d$ and is therefore even. Thus, for each $v$, the number of edges leaving $V_v$ in $R$, which equals the number of edges incident to $v$ in $T'$ by construction, is even as well. Since $R$ is connected, the multigraph $G'=(V,T')$ is connected as well. By construction, $w(T') = \tilde w(R)$. Since $G'$ is connected and all its vertices have even degree, $G'$ is Eulerian. Therefore, we can obtain a TSP tour $T$ from $T'$ by taking shortcuts. By the triangle inequality, $w(T) \leq w(T') = \tilde w(R)$.
\end{proof}

The same construction as in the proof of Theorem~\ref{thm:mininapx} yields the same result
for \MinRTwoCS d. A similar construction yields the same result for \MinARCS d.

\begin{corollary}
\label{cor:mininapx2}
For every $d \geq 2$, if \MinRTwoCS d\ can be approximated in polynomial time within a factor of $r$, then \MinTSP\ can be approximated in polynomial time
within a factor of $r$.
\end{corollary}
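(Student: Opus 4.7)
\medskip

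\noindent\emph{Proof plan.} The plan is to re-use the reduction from the proof of Theorem~\ref{thm:mininapx} unchanged: I would take the very same graph $H$ on $(d+1)\cdot n$ vertices with the same edge weights $\tilde w$, and verify that it reduces \MinTSP\ to \MinRTwoCS d\ rather than merely to \MinRCS d.

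The main step, and the only one that requires real work, is to check that the $d$-factor $R$ that Theorem~\ref{thm:mininapx} builds from a TSP tour $T$ is not only connected but in fact 2-edge-connected, so that $R$ is a feasible solution for \MinRTwoCS d. For even $d$ this is automatic, since any connected $d$-factor with $d$ even is 2-edge-connected. For $d=2$ a direct inspection of the construction shows that $R$ is a single Hamiltonian cycle. For odd $d\geq 3$ I would argue from the structure of $R$: inside each cluster $V_v$ the edges of $R$ form $K_{d+1}$ with only the single edge $\{v_1,v_2\}$ removed, which has edge-connectivity $d-1\geq 2$; between clusters the external edges of $R$ form exactly one cycle, namely the image of $T$. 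Removing any single edge of $R$ therefore either leaves all clusters internally connected and only turns the external cycle into a spanning path, or removes one additional internal edge of a single cluster (still connected for $d+1\geq 4$) while the external cycle remains intact. In both cases the graph stays connected, so $R$ is 2-edge-connected.

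The reverse direction is inherited verbatim from Theorem~\ref{thm:mininapx}: a 2-edge-connected $d$-factor is in particular a connected $d$-factor, so the same Eulerian-shortcutting argument converts any such $R$ into a TSP tour $T$ of $G$ with $w(T)\leq\tilde w(R)$. Together with $\tilde w(R)=w(T)$ from the forward direction, this shows that the optimum of \MinRTwoCS d\ on $H$ coincides with the optimum of \MinTSP\ on $G$. Feeding $H$ to an $r$-approximation for \MinRTwoCS d\ and then shortcutting the resulting 2-edge-connected $d$-factor therefore produces a TSP tour of $G$ of weight at most $r\cdot w(\TSP)$, which is exactly what the corollary asserts. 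The only nontrivial piece is the 2-edge-connectedness check for odd $d$; everything else is lifted directly from Theorem~\ref{thm:mininapx}.
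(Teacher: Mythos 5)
Your proposal is correct and follows essentially the same route as the paper, which simply reuses the construction of Theorem~\ref{thm:mininapx} for \MinRTwoCS d; the only detail the paper leaves implicit is the one you verify explicitly, namely that the $d$-factor built from a TSP tour is in fact 2-edge-connected, and your case analysis (even $d$, $d=2$, odd $d\geq 3$ via the internal $(d-1)$-edge-connectivity of $K_{d+1}$ minus one edge plus the single external cycle) is sound.
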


\begin{corollary}
\label{cor:mininapx3}
For every $d \geq 2$, if \MinARCS d\ can be approximated in polynomial time within a factor of $r$, then \MinATSP\ can be approximated
in polynomial time within a factor of $r$.
\end{corollary}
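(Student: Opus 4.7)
The plan is to mimic the reduction used for Theorem~\ref{thm:mininapx}, adapted to the directed setting. Given an instance of \MinATSP\ on a complete directed graph $G=(V,E)$ with triangle-inequality weights $w$, I would build an instance of \MinARCS d\ on a complete directed graph $H=(V',E')$, where $V'=\bigcup_{v\in V} V_v$ and each cluster $V_v=\{v_1,\dots,v_{d+1}\}$ has $d+1$ vertices. The weights $\tilde w$ are defined exactly as before: $\tilde w_{(v_i,v_j)}=0$ for all intra-cluster arcs, and $\tilde w_{(u_i,v_j)}=w_{(u,v)}$ for all inter-cluster arcs. Triangle inequality of $\tilde w$ follows from that of $w$.

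The first step is the forward direction: from a tour $T$ of $G$ I would produce a connected $d$-factor $R$ of $H$ with $\tilde w(R)=w(T)$. For each arc $(u,v)\in T$ put the arc $(u_1,v_2)$ into $R$; this contributes one unit of out-degree at $u_1$ and one unit of in-degree at $v_2$ per tour arc, so each $u_1$ (resp.\ $v_2$) collects exactly one external out-arc (resp.\ in-arc). Inside each cluster $V_v$ take the complete directed graph on the $d+1$ vertices and delete the single arc $(v_1,v_2)$. A direct count shows that each vertex of $V_v$ now has in-degree $d$ and out-degree $d$ in $H$; moreover $R$ is strongly connected because the external arcs traverse the clusters in the cyclic order of $T$ and every cluster is internally strongly connected.

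The second step is the back-mapping. Given any connected $d$-regular subgraph $R$ of $H$, build a directed multiset $T'$ on $V$ by keeping one arc $(u,v)$ for each inter-cluster arc $(u_i,v_j)\in R$ (and discarding intra-cluster arcs, which carry weight $0$). Since every vertex of $H$ has equal in- and out-degree, the same holds for every cluster $V_v$ after contracting it, so in the contracted multigraph on $V$ the in-degree of each vertex equals its out-degree; weak connectivity is inherited from connectivity of $R$. Hence the contracted multigraph admits an Eulerian directed circuit, which I shortcut to a Hamiltonian tour $T$ of $G$. The triangle inequality gives $w(T)\le w(T')=\tilde w(R)$.

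Combining the two directions, if $\mathcal{A}$ is an $r$-approximation for \MinARCS d, then applying $\mathcal{A}$ to $H$ and invoking the back-mapping yields a tour $T$ with $w(T)\le \tilde w(\mathcal{A}(H))\le r\cdot \tilde w(\ARCS d)\le r\cdot w(\ATSP)$, which is the desired $r$-approximation for \MinATSP. The only delicate part is verifying that deleting the single arc $(v_1,v_2)$ from the complete directed graph on $V_v$ leaves each vertex with exactly the right deficit so that the external arcs coming from the tour bring every vertex back up to in- and out-degree $d$; once this bookkeeping is done carefully the rest of the argument is a direct transcription of the undirected proof, with the Eulerian-shortcutting step using directed Eulerian circuits instead of undirected ones.
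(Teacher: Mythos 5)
Your proposal is correct and is essentially the paper's intended argument: the paper proves Corollary~\ref{cor:mininapx3} by simply asserting that a construction analogous to the one in Theorem~\ref{thm:mininapx} works in the directed setting, and your proof is exactly that directed transcription (clusters of $d+1$ vertices, complete digraph minus the single arc $(v_1,v_2)$, tour arcs routed as $(u_1,v_2)$, and the back-mapping via balanced in-/out-degrees per contracted cluster and a directed Eulerian circuit with shortcutting). The degree bookkeeping and the final chain of inequalities check out, so nothing is missing.
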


\MinTSP, \MinATSP, \MaxTSP, and \MaxATSP\ are \APX-hard~\cite{PY93}. Furthermore, the
reduction from \MinTSP\ to \MinRCS d\ is in fact an L-reduction~\cite{PY91}
(see also Shmoys and Williamson~\cite[Section 16.2]{Shmoys2011}). This proves the \APX-hardness of 
\MinRCS d\ for all $d$.
The reductions from \MinTSP\ to \MinRTwoCS d\ and from \MinATSP\ to \MinARCS d work in the same way.
Furthermore, by reducing from \MaxTSP\ and \MaxATSP\ in a similar way (here, the edges between
the copies of a vertex have high weight), we obtain \APX-hardness for \MaxRCS d\ and \MaxARCS d\ as well.

\begin{corollary}
\label{cor:apxhard}
For every fixed $d \geq 2$,
the problems \MinRCS d, \MinRTwoCS d, and \MaxRCS d are \APX-complete.
For every fixed $d \geq 1$,
\MinARCS d and \MaxARCS d are \APX-complete.
\end{corollary}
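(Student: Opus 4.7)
The plan is to assemble the corollary from two ingredients already present in the paper: the \APX-hardness reductions and the approximation algorithms that witness membership in \APX. For the three minimization variants I would first argue that the construction of Theorem~\ref{thm:mininapx} is in fact an L-reduction with Papadimitriou--Yannakakis constants $\alpha=\beta=1$: the forward map $T \mapsto R$ satisfies $\tilde w(R)=w(T)$, and the back-map $R \mapsto T$ satisfies $w(T) \leq \tilde w(R)$, so the optima of $G$ and $H$ coincide and the additive slack of any feasible $R$ above $w(\RCS d)$ upper-bounds the additive slack of the induced $T$ above $w(\TSP)$. The same argument applies verbatim to \MinRTwoCS d\ and to \MinARCS d\ via Corollaries~\ref{cor:mininapx2} and~\ref{cor:mininapx3}, so \APX-hardness of \MinTSP\ and \MinATSP~\cite{PY93} transfers to all three problems for every fixed $d$ in the stated range.

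For the maximization variants I would mimic the blow-up construction but give every intra-gadget edge a weight $M$ that dominates the inter-gadget weights, while keeping $M$ small enough that the optimum does not balloon out of proportion with the optimum of the original instance. A maximum-weight (directed) $d$-factor is then forced to use a fixed bundle of intra-gadget edges inside each $V_v$, and the remaining inter-gadget edges form a Hamiltonian cycle (or tour) on $G$. The objective of any feasible $d$-factor therefore decomposes as a fixed intra-gadget contribution plus the weight of the encoded tour, and this decomposition gives an L-reduction from \MaxTSP\ to \MaxRCS d\ and from \MaxATSP\ to \MaxARCS d; the \APX-hardness of the two maximization variants follows.

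Membership in \APX\ is then supplied directly by earlier sections: Theorem~\ref{thm:threeapx} places \MinRCS d\ and \MinRTwoCS d\ in \APX\ via the $3$-approximation, the Baburin--Gimadi $(1-\tfrac{2}{d(d+1)})$-approximation and its directed counterpart from Section~\ref{ssec:directedapx} place \MaxRCS d\ and \MaxARCS d\ in \APX\ for every fixed $d$, and Algorithm~\ref{rplus1} combined with a polynomial-time approximation of \MinATSP\ places \MinARCS d\ in \APX. The main obstacle I anticipate is calibrating $M$ in the maximization reductions: it must be large enough to force intra-gadget saturation in every near-optimal solution, yet small enough that the optimum of the blow-up is bounded by a constant times the optimum of the original \MaxTSP\ or \MaxATSP\ instance. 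Once this balance is made explicit, the L-reduction constants become explicit functions of $d$ and the rest is a mechanical assembly of the earlier theorems and the known \APX-hardness of the four underlying TSP variants.
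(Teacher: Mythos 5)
Your proposal is correct and follows essentially the same route as the paper: the paper likewise observes that the construction of Theorem~\ref{thm:mininapx} (and its analogues for \MinRTwoCS d\ and \MinARCS d) is an L-reduction in the sense of Papadimitriou and Yannakakis, handles \MaxRCS d\ and \MaxARCS d\ by the same blow-up with heavy intra-gadget edges, and obtains membership in \APX\ from the approximation algorithms of Sections~\ref{sec:approx} and~\ref{ssec:directedapx}. Your explicit calibration of the weight $M$ and of the L-reduction constants only fills in details that the paper leaves implicit.
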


\subsection{Hardness for Growing $d$}
\label{ssec:growingd}

In this section, we generalize the \NP-hardness proof for \RCS d\ by Cheah and Corneil~\cite{CC90}
to the case that $d$ grows with $n$.
Furthermore, we extend Theorem~\ref{thm:mininapx} and Corollaries~\ref{cor:mininapx2} and~\ref{cor:mininapx3} and
the \APX-hardness of the minimization variants (Corollary~\ref{cor:apxhard}) to growing $d$.
The \APX-hardness of \MaxRCS d\ and \MaxARCS d\ does not transfer to growing $d$ -- both can be approximated
within a factor of $1-O(1/d^2)$, which is $1-o(1)$ for growing $d$.

Let us consider Cheah and Corneil's~\cite[Section 3.2]{CC90} reduction from
$\RCS 2$, i.e., the Hamiltonian cycle problem,
to $\RCS d$. Crucial for their reduction is the notion of the $d$-expansion
of a vertex $v$, which is obtained as follows:
\begin{enumerate}
\item We construct a gadget $G_{d+1}$ by removing a matching of size $\lceil\frac{d}{2}\rceil-1$ from a complete graph on $d+1$ vertices.
\item We connect each vertex whose degree has been decreased by one to $v$.
\end{enumerate}

The reduction itself takes a graph $G$ for which we want to test
if $G \in \RCS 2$ and maps it to a graph $R_d(G)$ as follows:
For even $d$, $R_d(G)$ is the graph obtained by performing a $d$-expansion
for every vertex of $G$.
For odd $d$, the graph $R_d(G)$ is obtained by doing the following for each vertex $v$ of $G$:
add vertices $u_1,u_2,\ldots,u_{d-2}$; connect $v$ to $u_1, \ldots, u_{d-2}$;
perform a $d$-expansion on $u_1, \ldots, u_{d-2}$.
We have $G \in \RCS 2$ if and only if $R_d(G) \in \RCS d$.

We note that $R_d(G)$ has $(d+2)\cdot n$ vertices for even $d$ and $\Theta(d^2n)$ vertices for odd $d$
and can easily be constructed in polynomial time since $d<n$.

\begin{theorem}
\label{thm:growinghardness}
For every fixed $\eps > 0$, there is a function $f = \Theta(n^{1-\eps})$ that maps to even integers
such that $\RCS{f}$ is \NP-hard.

For every fixed $\eps > 0$, there is a function $f = \Theta(n^{\frac 12 - \eps})$ that maps to odd integers
such that $\RCS{f}$ is \NP-hard.
\end{theorem}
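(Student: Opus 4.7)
The plan is to feed the Cheah--Corneil reduction $R_d$ with a size parameter $d = d(m)$ chosen so that, as a function of the output size $n = |V(R_d(G))|$, the exponent comes out as desired. For the even case, fix $\eps > 0$ and let $\alpha = (1-\eps)/\eps$; pick $d(m) = 2\lceil m^\alpha/2\rceil$, which is a positive even integer of size $\Theta(m^\alpha)$. Applying $R_{d(m)}$ to an $m$-vertex Hamiltonian-cycle instance $G$ yields a graph on $n(m) := (d(m)+2)\,m = \Theta(m^{1/\eps})$ vertices, with $G \in \RCS{2}$ iff $R_{d(m)}(G) \in \RCS{d(m)}$. Inverting $n(m) = \Theta(m^{1/\eps})$ gives $m = \Theta(n(m)^{\eps})$ and hence $d(m) = \Theta(n(m)^{\eps\alpha}) = \Theta(n(m)^{1-\eps})$, which is exactly the target relation between $d$ and $n$.

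To turn this into a reduction to a single problem $\RCS{f}$, we extend the assignment $n(m) \mapsto d(m)$ to all natural numbers. Since $n(\cdot)$ is strictly increasing for $m$ large enough, we set $f(n) := d(m)$ for the largest $m$ with $n(m) \leq n$ (with a fixed even value like $2$ for very small $n$). This $f$ is even-valued, polynomial-time computable (by binary search on $m$ using the explicit formula for $n(m)$), and satisfies $f = \Theta(n^{1-\eps})$ because consecutive sizes obey $n(m+1)/n(m) = 1 + O(1/m)$, so every $n \in [n(m), n(m+1))$ still has $n = \Theta(m^{1/\eps})$. The reduction $G \mapsto R_{d(m)}(G)$ then maps $m$-vertex HAM instances to $\RCS{f}$ instances on $n(m)$ vertices in polynomial time, and correctness is immediate from $f(n(m)) = d(m)$ together with Cheah and Corneil's equivalence.

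The odd case runs identically using the odd variant of $R_d$, for which $|V(R_d(G))| = (d^2 - 3)\,m = \Theta(d^2 m)$. Setting $\beta = (1-2\eps)/(4\eps)$ (which is positive for $\eps < 1/2$; the case $\eps \geq 1/2$ already follows from constant-$d$ hardness) and choosing $d(m)$ to be an odd integer of size $\Theta(m^\beta)$ gives $n(m) = \Theta(m^{2\beta+1}) = \Theta(m^{1/(2\eps)})$ and hence $d(m) = \Theta(n(m)^{1/2-\eps})$, as required.

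The only non-routine step is the bookkeeping around $f$: the reduction $R_d$ produces only graphs of the discrete sizes $n(m)$, yet $f$ must be defined on every natural number and have a clean asymptotic rate. The step-function extension above resolves this because $n(m+1)/n(m) \to 1$ fast enough that interpolating the $d$-values to all $n$ does not disturb the target exponent $1-\eps$ (respectively $1/2-\eps$).
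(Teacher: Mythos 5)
Your proposal is correct and follows essentially the same route as the paper's proof: apply the Cheah--Corneil reduction $R_d$ with $d$ chosen as a suitable power of the source size, note the output size is $\Theta(m^{1/\eps})$ (resp.\ $\Theta(d^2m)$ for odd $d$), invert to get $d=\Theta(n^{1-\eps})$ (resp.\ $\Theta(n^{1/2-\eps})$), and extend $f$ from the image sizes to all of $\nat$ while preserving the growth rate. Your step-function definition of $f$ with the ratio bound $n(m+1)/n(m)=1+O(1/m)$ is just a more explicit version of the paper's ``interpolate $f$ to maintain the growth bound,'' so there is nothing genuinely different to compare.
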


\begin{proof}
We first present the proof for the case that we map to even integers.
After that, we briefly point out the difference for odd integers.

We choose $d = 2 \lceil n^{\frac{1-\eps}\eps} \rceil$
and apply $R = R_d(G)$.
The graph $R$ has $g(n) = n \cdot (2 \lceil n^{\frac{1-\eps}\eps} \rceil +2)$
vertices since $d$ is even. We have $g = \Theta(n^{1/\eps})$. Now we determine $f$:
we require 
$f(g(n)) = d = 2 \lceil n^{\frac{1-\eps}\eps} \rceil$.
This can be achieved because $g = \omega(n)$ is an injective function.

Expressed as a function of $g$, we have
$d = \Theta(g(n)^{1-\eps})$. For natural numbers that are not images of $g$,
we interpolate $f$ to maintain the growth bound. Thus, $f(n) = \Theta(n^{1-\eps})$.

Let us now point out the differences for functions $f$ mapping to odd integers.
In this case, since the reduction for $d$ maps to graphs of size $\Theta(d^2 n)$, we have to choose
$d = \Theta(n^{\frac{1-\eps}{2\eps - 1}})$. This, however, works only up to $\eps > 1/2$
or functions up to $n^{\frac 12 - \eps}$.
\end{proof}

In the same way as the \NP-completeness, the inapproximability can be transferred.
The reduction creates graphs of size $(d+1) \cdot n$. The construction is the same
as in Section~\ref{ssec:tspinapprox}, and the proof follows the line
of the proof of Theorem~\ref{thm:growinghardness}. Here, however, we do not have to distinguish
between odd and even $d$ for the symmetric variant, as the reduction in Section~\ref{ssec:tspinapprox}
is the same for both cases.

\begin{theorem}
For every fixed $\eps > 0$, there is a function $f = \Theta(n^{1-\eps})$
such that $\MinRCS{f}$ and \MinRTwoCS{f}\ are \APX-hard and cannot be approximated better than \MinTSP.

For every fixed $\eps > 0$, there is a function $f = \Theta(n^{1-\eps})$
such that $\MinARCS{f}$ is \APX-hard and cannot be approximated better than \MinATSP.
\end{theorem}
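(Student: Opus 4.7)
The plan is to combine the size-scaling argument of Theorem~\ref{thm:growinghardness} with the reductions of Section~\ref{ssec:tspinapprox}. The reduction underlying Theorem~\ref{thm:mininapx} maps a \MinTSP\ instance on $n$ vertices to a \MinRCS{d}\ instance on $N=(d+1)n$ vertices, works uniformly for every $d\ge 2$ (unlike the decision reduction of Cheah and Corneil, there is no parity case-split here), and is exact in the sense that an $r$-approximation for \MinRCS{d}\ yields an $r$-approximation for \MinTSP. The same construction reduces \MinTSP\ to \MinRTwoCS{d}, and its asymmetric analogue reduces \MinATSP\ to \MinARCS{d}; see Corollaries~\ref{cor:mininapx2} and~\ref{cor:mininapx3}.

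Given $\eps>0$, set $d(n)=\lceil n^{(1-\eps)/\eps}\rceil$, so that $N(n)=(d(n)+1)n=\Theta(n^{1/\eps})$ and hence $d(n)=\Theta(N(n)^{1-\eps})$. The reduction remains polynomial-time because $N(n)$ is polynomial in $n$. Following exactly the template used in Theorem~\ref{thm:growinghardness}, define $f$ by $f(N(n))=d(n)$ on the (infinite, strictly increasing) image of $N(\cdot)$, and interpolate $f$ to the remaining natural numbers so as to preserve $f=\Theta(N^{1-\eps})$; this is possible since $N(\cdot)$ is injective and $d(\cdot)$ varies slowly between consecutive images. For the directed version, one uses the asymmetric analogue of the reduction (which again has linear size blow-up $(d+1)n$), so the same choice of $d(n)$ and $f$ works.

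It then follows directly that a polynomial-time $r$-approximation for \MinRCS{f}\ would yield a polynomial-time $r$-approximation for \MinTSP, by running the reduction on \MinTSP\ instances of size $n$ and invoking the hypothetical approximation on the resulting instance of size $N(n)$, for which $f(N(n))=d(n)$ by construction. Since \MinTSP\ is \APX-hard~\cite{PY93}, so is \MinRCS{f}, and it cannot be approximated better than \MinTSP. Substituting the appropriate reduction gives the identical conclusion for \MinRTwoCS{f}\ (via \MinTSP) and \MinARCS{f}\ (via \MinATSP).

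The only genuinely delicate step is the extension of $f$ to arguments $N$ lying outside the image of $N(\cdot)$, and it will be the main thing to write out carefully. However, because hardness and inapproximability need only hold on an infinite family of input sizes -- and the image of $N(\cdot)$ already supplies one -- we can extend $f$ essentially freely on intermediate arguments while keeping its asymptotic growth equal to $\Theta(N^{1-\eps})$, exactly as is done in the proof of Theorem~\ref{thm:growinghardness}. No new combinatorial idea is required beyond those already in place.
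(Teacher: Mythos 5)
Your proposal is correct and follows essentially the same route as the paper: reuse the $(d+1)n$-vertex reduction of Section~\ref{ssec:tspinapprox} (which needs no odd/even case-split), choose $d(n)=\Theta(n^{(1-\eps)/\eps})$ so that $d=\Theta(N^{1-\eps})$ in the constructed instance size $N$, and define $f$ on the image of $N(\cdot)$ with interpolation elsewhere, exactly as in the proof of Theorem~\ref{thm:growinghardness}. No gap to report.
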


\section{Further Algorithms}
\label{sec:special}
\subsection{2-Approximation for $d \geq n/3$}
\label{ssec:twoapprox}

If $d \geq n/3$, then we easily get a better approximation algorithm
for \MinRTwoCS d\ and \MinRCS d.
In this case, \Fact d\ consists either of a single component -- then we are done -- or of two components $C_1$ and $C_2$ with
$C_i = (V_i, E_i)$.
In the latter case, we proceed as follows: first, find the lightest edge $e = \{u, v\}$ with $u \in V_1$
and $v \in V_2$. Second, choose any edges $\{u, u'\} \in E_1$ and $\{v, v'\} \in E_2$.
Third, remove $\{u, u'\}$ and $\{v, v'\}$ and add $\{u,v\}$ and $\{u', v'\}$.
The increase in weight is at most $2 \cdot w(\{u,v\})$ by the triangle inequality.

The resulting graph is clearly $d$-regular. It is connected since $C_1$ and $C_2$
are 2-edge-connected: they both consist of at most $\frac{2n}3 -1$ vertices and
are $d$-regular with $d \geq n/3$. Thus, they are even Hamiltonian by
Dirac's theorem \cite{DW00}.
Furthermore, any connected $d$-regular graph must have at least two edges connecting
$V_1$ and $V_2$: If $d$ is even, then this follows by 2-edge-connectedness.
If $d$ is odd, then $|V_1|$ and $|V_2|$ are even and, thus, an even number of edges
must leave either of them.
Thus, $w(\{u,v\}) \leq \frac 12 \cdot w(\RCS d)$.
Since we add at most $2 \cdot w(\{u,v\})$ and also have $w(\Fact d) \leq w(\RCS d)$,
we obtain the following theorem.

\begin{theorem}
For $d \geq n/3$, there is a polynomial-time
2-approximation for \MinRCS d.
\end{theorem}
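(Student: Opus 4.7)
The plan is to compute a minimum-weight $d$-factor $F$ (polynomial-time by Tutte's reduction) and then, if it is disconnected, patch it into a connected $d$-factor by a local edge swap. The first observation I would use is that under the assumption $d \geq n/3$, the $d$-factor $F$ has at most two connected components, because each component is $d$-regular and therefore contains at least $d+1 \geq n/3 + 1$ vertices. If $F$ is connected, then $w(F) \leq w(\RCS d)$ by Lemma~\ref{lem:undirected}\eqref{factrcs} and we are done with a $1$-approximation. So the interesting case is that $F$ splits into exactly two components $C_1 = (V_1, E_1)$ and $C_2 = (V_2, E_2)$.

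Next I would establish that each $C_i$ is $2$-edge-connected. Since $|V_i| \leq n - (d+1) \leq \tfrac{2n}{3} - 1$ and the minimum degree in $C_i$ is $d \geq n/3 \geq |V_i|/2$, Dirac's theorem implies that $C_i$ is Hamiltonian, hence $2$-edge-connected. The algorithm then selects a lightest cross edge $e = \{u,v\}$ with $u \in V_1$, $v \in V_2$, picks arbitrary $\{u,u'\} \in E_1$ and $\{v,v'\} \in E_2$, and replaces $\{u,u'\},\{v,v'\}$ by $\{u,v\},\{u',v'\}$. Degrees are preserved, and $2$-edge-connectedness of each $C_i$ guarantees that removing one edge from each side does not disconnect it, so the resulting graph $R$ is a connected $d$-factor. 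The triangle inequality bounds the cost of the two added edges by
\[
  w(\{u,v\}) + w(\{u',v'\}) \;\leq\; 2\,w(\{u,v\}) + w(\{u,u'\}) + w(\{v,v'\}),
\]
so $w(R) \leq w(F) + 2\,w(\{u,v\})$.

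The crux is to bound $w(\{u,v\})$ in terms of $w(\RCS d)$. I would argue that any connected $d$-factor must use at least two edges of the cut $(V_1,V_2)$, which implies $w(\{u,v\}) \leq \tfrac{1}{2} w(\RCS d)$ because $\{u,v\}$ is the lightest cross edge. To prove the cross-cut claim I split on parity: if $d$ is even then $\RCS d$ is Eulerian and hence $2$-edge-connected, so at least two cut edges are required; if $d$ is odd then each $|V_i|$ must be even (since $F$ restricted to $V_i$ is $d$-regular, forcing $d|V_i|$ even), and therefore the number of edges leaving $V_1$ in any $d$-regular spanning subgraph has the same parity as $d|V_1|$, which is even, so by connectedness it is at least two. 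Combining with $w(F) \leq w(\RCS d)$ gives $w(R) \leq w(\RCS d) + 2 \cdot \tfrac{1}{2} w(\RCS d) = 2 \cdot w(\RCS d)$. The step that requires the most care is the parity/connectivity argument yielding two cross edges in the odd case; the rest is triangle-inequality bookkeeping together with Dirac's theorem.
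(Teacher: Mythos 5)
Your proposal is correct and follows essentially the same route as the paper: compute a minimum-weight $d$-factor, observe it has at most two components, use Dirac's theorem to get 2-edge-connectedness of each component, perform the same single edge swap bounded via the triangle inequality, and lower-bound the lightest cut edge by the parity (odd $d$) or Eulerian/2-edge-connectedness (even $d$) argument showing any connected $d$-factor crosses the cut at least twice. The only difference is that you spell out a few steps (the two-component bound and the parity computation) slightly more explicitly than the paper does.
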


\subsection{Decision Problem for $d = \lceil \frac n2 \rceil -1$}
\label{ssec:decision}

For $d \geq n/2$, any $d$-factor is immediately connected and also the minimization variant can be solved efficiently.
In this section, we slightly extend this to the case of $d \geq \frac n2 -1$.

We assume that the input graph $G$ is connected. To show that the case $d=\lceil \frac n2 \rceil -1$ is in \DP, we compute a $d$-factor. If none exists or we obtain a connected $d$-factor,
then we are done. Otherwise, we have a $d$-factor consisting of two components $C_1$ and $C_2$ which are both cliques of size $n/2$.
If $G$ contains a cut vertex, say, $u \in C_1$, then this is the only vertex with neighbors in $C_2$. In this case, $G$ does not contain a connected $d$-factor. If $G$ does not contain a cut vertex, there are
two disjoint edges $e=\{u,v\}$, $e'=\{u', v'\}$
with $u, u' \in C_1$ and $v, v' \in C_2$. Adding $e$ and $e'$ and removing $\{u, u'\}$
and $\{v, v'\}$ yields a connected $d$-factor.

\begin{theorem}
\RCS d\ is in $\DP$ for every $d$ with $d \geq \frac n2 -1$.
\end{theorem}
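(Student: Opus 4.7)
The plan is to dispose of the easy sub-cases first and then concentrate on the only genuinely new case, $n$ even and $d = n/2 - 1$. If $d \geq n/2$, then any two components of a $d$-regular graph on $n$ vertices together contain more than $n$ vertices, so any $d$-factor is automatically connected; I just invoke Tutte's polynomial-time algorithm for $d$-factors and answer ``yes'' iff it succeeds. If $n$ is odd and $d = (n-1)/2$, a hypothetical disconnected $d$-factor would have two components of size at least $d+1 = (n+1)/2$ each, summing to more than $n$, which is impossible; so again Tutte's algorithm suffices.

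For the main case ($n$ even, $d = n/2 - 1$), I first test whether $G$ is connected (else answer ``no''), then compute a $d$-factor $F$ via Tutte's reduction. If no $d$-factor exists, output ``no''; if $F$ is connected, output ``yes''. Otherwise each component of $F$ has $\geq d+1 = n/2$ vertices, so $F$ has exactly two components $C_1, C_2$ on vertex sets $V_1, V_2$ with $|V_i| = d+1$; since each $C_i$ is $d$-regular on $d+1$ vertices, $C_i = K_{d+1}$, and in particular both $V_1$ and $V_2$ are cliques of $G$. The key structural claim is then that $G$ has a connected $d$-factor if and only if $G$ has no cut vertex. Since testing for cut vertices is a standard polynomial procedure, this claim completes the algorithm.

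For the ``only if'' direction of the claim, suppose $u$ is a cut vertex, say $u \in V_1$. Because $V_1 \setminus \{u\}$ and $V_2$ are each connected (being cliques) but $G - u$ is disconnected, every edge of $G$ between $V_1$ and $V_2$ is incident to $u$. In any $d$-factor $F'$, each $v \in V_1 \setminus \{u\}$ has all of its $d$ neighbors inside $V_1$; but $|V_1| - 1 = d$, so $v$ must be adjacent to every other vertex of $V_1$, including $u$. Thus $u$ already has $d$ neighbors in $V_1 \setminus \{u\}$ in $F'$, leaving no capacity for an edge into $V_2$, so $F'$ is disconnected. For the ``if'' direction, I need two disjoint crossing edges $\{u,v\}, \{u',v'\}$ of $G$ with $u,u' \in V_1$ and $v,v' \in V_2$; the modification that deletes the clique-edges $\{u,u'\}$ and $\{v,v'\}$ from $F$ and inserts $\{u,v\}$ and $\{u',v'\}$ then keeps all degrees equal to $d$ and merges the two components. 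The existence of two disjoint crossing edges follows from the no-cut-vertex hypothesis: the crossing edges form a bipartite (hence triangle-free) graph between $V_1$ and $V_2$, and in a triangle-free graph any family of pairwise-intersecting edges shares a common endpoint $w$; but then removing $w$ would separate the two cliques in $G$, contradicting the assumption.

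The main obstacle is the ``only if'' direction of the cut-vertex characterization, which is where the extremely tight degree condition $d = n/2 - 1$ does all the work: it forces every non-$u$ vertex of $V_1$ to be adjacent to $u$ in any $d$-factor, which in turn saturates $u$'s degree inside $V_1$ and precludes any connecting edge to $V_2$. Everything else is routine structural bookkeeping together with a one-line edge swap.
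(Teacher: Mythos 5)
Your proof is correct and takes essentially the same route as the paper's: reduce to the case $n$ even and $d=\frac n2-1$, compute a $d$-factor, note that a disconnected one consists of two cliques on $n/2$ vertices, and then decide via the cut-vertex dichotomy, using the swap of two disjoint crossing edges against two clique edges. You merely spell out two steps the paper leaves implicit (the degree-saturation argument showing a cut vertex forbids any connected $d$-factor, and the star-or-triangle argument yielding two disjoint crossing edges), which is consistent with the paper's argument.
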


\subsection{Approximating \MaxARCS d}
\label{ssec:directedapx}

The approximation algorithm for \MaxRCS d\ \cite{BG:2004} can easily be adapted to work for
\MaxARCS d: We compute a directed $d$-factor of maximum weight.
Any component consists of at least $d+1$ vertices, thus at least $d \cdot (d+1)$ arcs.
We remove the lightest arc of every component and connect the resulting (still at least weakly connected)
components arbitrarily to obtain a connected $d$-factor.
Since we have removed at most a $\frac 1{d\cdot (d+1)}$-fraction of the weight, we obtain the following result.

\begin{theorem}
For every $d$, \MaxARCS d\ can be approximated within a factor of $1-\frac{1}{d \cdot (d+1)}$.
\end{theorem}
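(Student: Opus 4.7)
The plan is to adapt the algorithm of Baburin and Gimadi~\cite{BG:2004} for \MaxRCS d\ to the directed setting, following the sketch given just before the statement. First, I would compute in polynomial time a maximum-weight directed $d$-factor $C$ of $G$ using the standard reduction of directed $d$-factors to maximum-weight perfect matchings in a bipartite auxiliary graph (the same reduction used implicitly in Lemma~\ref{lem:directed}\eqref{dir:monotone}). Then I would split $C$ into its weakly connected components $C_1,\ldots,C_k$ and record two structural observations: because every vertex has in-degree and out-degree $d$, each weakly connected component of $C$ is automatically strongly connected; and each $C_i$ must contain at least $d+1$ vertices (to accommodate the $d$ distinct out-neighbors of any single vertex), hence at least $d(d+1)$ arcs. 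The averaging argument then gives that the lightest arc $e_i = (u_i,v_i)$ in $C_i$ satisfies $w(e_i) \leq w(C_i)/\bigl(d(d+1)\bigr)$.

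The main construction is to delete $e_i$ from every $C_i$ and reconnect the pieces into a single $d$-regular strongly connected graph $R$ by adding the arcs $(u_i, v_{i+1})$ cyclically (indices mod $k$). Each addition precisely restores the one unit of out-degree lost at $u_i$ and the one unit of in-degree lost at $v_{i+1}$, so $R$ is a $d$-factor; and the $k$ new arcs chain the (weakly connected) pieces into one weakly connected graph, which is therefore strongly connected because in-degree equals out-degree everywhere.

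The one step that needs genuine care, and which I expect to be the main obstacle, is verifying that $C_i \setminus \{e_i\}$ is still weakly connected, so that the cyclic splicing actually glues complete pieces rather than leaving stranded fragments. I would argue this by using strong connectivity of $C_i$ to find a simple directed path $P$ from $v_i$ to $u_i$ in $C_i$; since $P$ is simple and ends at $u_i$, it cannot use the arc $(u_i, v_i)$, and its undirected trace certifies that $\{u_i, v_i\}$ is not a bridge in the underlying undirected multigraph. Everything else is bookkeeping: assuming the non-negativity of edge weights standard for the maximization variants, the newly added arcs only help, so
\[
w(R) \;\geq\; w(C) - \sum_{i=1}^{k} w(e_i) \;\geq\; w(C)\left(1 - \frac{1}{d(d+1)}\right) \;\geq\; \left(1 - \frac{1}{d(d+1)}\right) w(\ARCS d),
\]
where the final inequality uses that the optimal connected $d$-factor is itself a $d$-factor and hence $w(C) \geq w(\ARCS d)$.
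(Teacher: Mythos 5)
Your proof is correct and follows the same approach as the paper's: compute a maximum-weight directed $d$-factor, use the $d(d+1)$-arc lower bound per component to remove the lightest arc of each, and splice the components together, losing at most a $\frac{1}{d(d+1)}$-fraction of weight relative to an optimum-dominating bound. You merely spell out the details the paper treats as routine (the cyclic reconnection $(u_i,v_{i+1})$, strong connectivity of components, and why removing one arc keeps each component weakly connected), all of which are accurate.
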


\section{Open Problems}
\label{sec:open}

An obvious open problem is to improve the approximation
ratios. Apart from this, let us mention two open problems: First, is it possible to achieve constant factor
approximations for minimum-weight $k$-edge-connected or $k$-vertex-connected $d$-regular graphs?
Without the regularity requirement, the problem of computing minimum-weight $k$-edge-connected graphs can be approximated
within a factor of $2$~\cite{KhullerV94} and the problem of computing minimum-weight $k$-vertex-connected graphs can
be approximated within a factor of $2 + 2 \cdot \frac{k-1}n$ for metric instances~\cite{KR1996}
and still within a factor of $O(\log k)$ if the instances are not required to satisfy the triangle inequality~\cite{kVCSS}.
We refer to Khuller and Raghava\-cha\-ri~\cite{KhullerR08} for a concise survey.

Second, we have seen that $\RCS{(\lceil \frac n2 \rceil -1)} \in \DP$, but we do not know if \MinRCS{(\lceil \frac n2 \rceil -1)} can be solved in polynomial time as well.
In addition, we conjecture that also $\RCS{(\lceil \frac n2 \rceil - k)}$ is in \DP\ for any constant $k$.


\end{document}